\theoremstyle{plain}
\newtheorem{thm}{\protect\theoremname}
\theoremstyle{definition}
\theoremstyle{remark}
\newtheorem{rem}[thm]{\protect\remarkname}
\theoremstyle{plain}
\theoremstyle{plain}
\newtheorem{prop}[thm]{\protect\propositionname}
\providecommand{\definitionname}{Definition}
\providecommand{\lemmaname}{Lemma}
\providecommand{\propositionname}{Proposition}
\providecommand{\remarkname}{Remark}
\providecommand{\theoremname}{Theorem}
\newcommand {\aplt} {\ {\raise-.5ex\hbox{$\buildrel<\over{\mbox{\scriptsize $\sim$}}$}}\ }
\definecolor{Gray}{gray}{0.9}
\begin{document}
\title{Lattice-Aided Extraction of Spread-Spectrum Hidden Data}

\author{Fan Yang, Shanxiang Lyu, Hao Cheng, Jinming Wen, Hao Chen
	\thanks{
	Fan Yang, Shanxiang Lyu, Hao Cheng, Jinming Wen and Hao Chen are with the College of Cyber Security/College of Information Science and Technology, Jinan
		University, Guangzhou 510632, China.
		Corresponding author: Shanxiang Lyu (Email: lsx07@jnu.edu.cn).
	}
   \thanks{This work was presented in part at the 2022 International Conference on Digital Forensics \& Cyber Crime (ICDF2C).}
}
\maketitle

\begin{abstract}  
This paper discusses the problem of extracting spread spectrum hidden data from the perspective of lattice decoding. Since the conventional blind extraction scheme multi-carrier iterative generalize least-squares (M-IGLS) and non-blind extraction scheme minimum mean square error (MMSE) suffer from performance degradation when the carriers lack sufficient orthogonality, we present two novel schemes from the viewpoint of lattice decoding, namely multi-carrier iterative successive interference cancellation (M-ISIC) and sphere decoding (SD). The better performance of M-ISIC and SD are confirmed by both theoretical justification and numerical simulations.
\end{abstract}

\begin{IEEEkeywords} Spread spectrum, Lattices, Successive interference cancellation, Sphere decoding.  \end{IEEEkeywords}
\section{Introduction}
\noindent

\IEEEPARstart{D}{ata}
hiding describes the process of embedding secret messages into different forms of multimedia and transmitting over the open channel. As an important complement to conventional cryptographic systems, it provides flexible solutions for copyright protection, integrity verification, covert communication and other information security fields. To meet the requirements of various scenarios, the researchers' goals include reducing the distortion of the cover to get imperceptibility, increasing hidden capacity, and improving the robustness of the embedding scheme.
 
Watermark embedding and extraction are two crucial parts in the data hiding model. There are many literature describe various data hiding schemes over the past three decades \cite{cox2002digital, 9053484, lin2021lattice, lu2020secure, du2020high}, one of the mainstream directions is spread-spectrum (SS) steganography, because it has good robustness and security. By introducing the principle analogous to spread-spectrum communication, the concept of SS in data hiding was fist proposed by Cox et al. \cite{cox1997secure}. The basis idea of SS in data hiding is to disperse the message into many frequency bins of the host data by pseudorandom sequences, so as to make the energy in each one extremely small and certainly undetectable. This is similar to transmitting a narrowband signal with a much larger bandwidth and a lower power density. Some schemes have been proposed to improve upon SS. E.g, using the technique of minimum-mean-square error to reduce the interference caused by the host itself \cite{malvar2003improved}, improving signature design to reduce the decoding error rate \cite{9053484}, and using multi-carriers instead of a single carrier to improve the number of payloads \cite{li2013extracting,li2015steganalysis}.
 
 Depending on whether the receiver has the 
 pre-shared keys, the extraction of information from the multicarrier SS watermarking system consists of blind and non-blind extractions. 
 
\textbf{ i)} Blind extraction amounts to steganalysis via    “Watermarked Only Attack (WOA)” \cite{perez2009spread}. It is one of the scenarios that has attracted a lot of attention since it models most of the practical problems. It assumes that the attacker only has access to the composed signal, without any information about the host data and the spreading codes. Under this premise, the process of fully recovering embedded data is called blind extraction. To break the single-carrier SS method, Gkizeli et al. \cite{1530251} proposed a blind method named iterative generalized least squares (IGLS) to recover unknown messages hidden in image, which has remarkable recovery performance and low complexity. However, steganographers may prefer multi-carrier SS transform-domain embedding to improve security or the amount of information in a single transmission. The steganalysis for this situation seems more worthy of study. Since the underlying mathematical problem of extracting multiple message sequences from a mixed observation is akin to blind source separation (BSS) in speech signal processing, classical BSS algorithms such as independent component analysis (ICA) \cite{bingham2000fast} and Joint Approximate Diagonalization of Eigenmatrix (JADE) \cite{cardoso1999high} can also be used to extract the hidden data. Regrettably, these algorithms are far from satisfactory due to the correlated signal interference caused by the multi-carrier SS problem. In this regard, 
Li Ming et al. \cite{li2013extracting} developed an improved IGLS scheme referred to as multi-carrier iterative generalized least-squares (M-IGLS). The crux inside M-IGLS is a linear estimator referred to as zero-forcing (ZF) in lattice decoding literature \cite{ling2011proximity,LyuL19TSP}. M-IGLS exhibits satisfactory performance only when the carriers/signatures show sufficient orthogonality. For instance, M-IGLS  shows the case of embedding (and extracting) $4$ data streams by modifying $63$ host coefficients \cite{li2013extracting}.



\textbf{ii)} Non-blind extraction of SS watermarking adopts linear minimum mean square error (MMSE) estimator as the default option \cite{ingle2000statisical,li2013extracting}. However, linear MMSE is optimal only when the prior symbols admit Gaussian distributions, rather than the discrete distribution over $\lbrace\pm 1\rbrace$ \cite{LyuL19TSP}. MMSE also works well when the embedding matrix defined by carriers features sufficient orthogonality, but this property may not be guaranteed in the transmitter's side. 


As the discrete symbols (i.e., $\lbrace\pm 1\rbrace$) in multicarrier SS naturally induces lattices, it becomes tempting to adopt more sophisticated lattice decoding algorithms to improve upon the blind and non-blind extraction of  multicarrier SS watermarking. For this reason, this paper contributes in the following aspects:
\begin{itemize}
	\item  First, we propose a new hidden data blind extraction algorithm referred to as multi-carrier iterative successive interference cancellation (M-ISIC). Like M-IGLS, M-ISIC also estimates the mixing matrix and the integer messages iteratively by alternating minimization principle. However, in the step when the mixing matrix has been estimated, M-ISIC adopts successive interference cancellation (SIC) rather than ZF. Due to the larger decoding radius of SIC over ZF, the proposed M-ISIC is deemed to enjoy certain performance gains. Moreover, M-ISIC also features low complexity.
 \item  Second, we present a sphere decoding (SD) algorithm for the legit extraction of multicarrier SS watermarking. While maximum-likelihood (ML) extraction can be implemented via a brute-force enumeration,  sphere decoding (SD) \cite{fincke1985improved}
 is the better implementation of ML to save computational complexity. The magic of SD is to restrict the search space to within a sphere enclosing the query vector. Simulation results show that SD outperforms the default MMSE estimator especially when the channel matrix lacks sufficient orthogonality. 
 \item  Third, by formulating the problem of extracting multi-carrier SS as a lattice decoding problem, it fosters a deeper connection between the data hiding community and the post-quantum cryptography community. Lattice-based constructions are currently important candidates for post-quantum cryptography \cite{chinaf/WangLLWZLCYWL22}. The analysis of the security level of lattice-based cryptographic schemes also relies on sophisticated lattice decoding algorithms. This implies that, in the future, a novel algorithm for one community may also be explored for the other.
\end{itemize}

The rest of this paper is organized as follow. In Section \uppercase\expandafter{\romannumeral2}, preliminaries on SS embedding and lattice decoding are briefly introduced. In Section \uppercase\expandafter{\romannumeral3}, M-ISIC is presented and the comparisons between M-IGLS and M-ISIC are made.  Section \uppercase\expandafter{\romannumeral4} discusses sphere decoding and MMSE. Simulation results and conclusion are given in Section \uppercase\expandafter{\romannumeral5} and Section \uppercase\expandafter{\romannumeral6} respectively.

The following notation is used throughout the paper. Boldface upper-case   and lower-case letters represent matrices and column vectors, respectively. 
$\mathbb{R}$ denotes the set of real numbers, while $\mathbf{I}$ denotes an identity matrix. $(\cdot)^\top$ is the 
matrix transpose operator, and  $||\cdot||$, $||\cdot||_F$ denote vector
norm, and matrix Frobenius norm, respectively.
$\mathrm{sign}(\cdot)$ represents a  quantization function  with respect to $\lbrace-1,1\rbrace$.

\section{Preliminaries}
\subsection{Basics of Multicarrier SS}
\subsubsection{Embedding}
Without loss of generality,  a standard gray-scale image $\mathbf{H}\in \mathcal{M}^{N_1 \times N_2}$  is chosen as the host, where $\mathcal{M}$ denotes a finite image alphabet and $N_1\times N_2$ denotes the size of the image. Then $\mathbf{H}$ is partitioned into $M$ non-overlapping blocks $\mathbf{H}_1, ..., \mathbf{H}_M$ (of size $\frac{{N}_{1}\times{N}_{2}}{{M}}$). After performing DCT transformation and zig-zag scanning for each block, the cover object in each block can be generated as $\mathbf{x}(m)\in \mathbb{R}^{L}$, where $L \leq\frac{{N}_{1}\times{N}_{2}}{M}$ and $m=1, ..., M$. 


Multicarrier SS embedding scheme employs $K$ distinct carriers (signatures) $\mathbf{s}_1, ..., \mathbf{s}_K$ to implant $K$ bits of information $b_1, ..., b_K \in \lbrace \pm 1 \rbrace$ to each $\mathbf{x}(m)$.  Subsequently, the modified cover (stego) is generated by
\begin{equation}\label{emd_equa}
\mathbf{y}(m) = \sum_{k=1}^{K} A_k b_k(m)\mathbf{s}_k +\mathbf{x}(m) + \mathbf{n}(m), \,\, m=1,2, ..., M,
\end{equation}
where $A_k$ denotes the embedding amplitude of $\mathbf{s}_k$,  $b_k(m)$  denotes the messages of the $m$th block, and $\mathbf{n}(m)$ represents the additive white Gaussian noise vector of mean $\mathbf{0}$ and covariance $\sigma_{n}^2 \mathbf{I}_L$. For symbolic simplicity, we can express the embedding of $\mathbf{b}(1), ..., \mathbf{b}(M)$ in the matrix form as
\begin{equation}
\mathbf{Y}=\mathbf{VB}+\mathbf{Z}\label{ssembed},
\end{equation}where $\mathbf{Y} \triangleq \left[\mathbf{y}(1), ...,  \mathbf{y}(M)\right]  \in\mathbb{R}^{L \times M}$, $\mathbf{B} \triangleq \left[ \mathbf{b}(1), ..., \mathbf{b}(M)\right]\in \lbrace \pm 1 \rbrace^{K\times M}$, $\mathbf{V} \triangleq \left[A_1\mathbf{s}_1, ...,  A_M\mathbf{s}_M\right]  \in\mathbb{R}^{L \times K}$,  $\mathbf{Z} \triangleq \left[\mathbf{x}(1) + \mathbf{n}(1), ...,  \mathbf{x}(M) + \mathbf{n}(M)\right]  \in\mathbb{R}^{L \times M}$. In general,   $K\leq L$, which avoids inducing underdetermined system of equations.

By taking expectation over the randomness of $\mathbf{s}_k$, the embedding distortion due to  $A_k b_k(m)\mathbf{s}_k$ is
\begin{equation}
D_k= \mathbb{E}\lbrace || A_k b_k(m)\mathbf{s}_k||^2 \rbrace = A_k^2, \,\, k=1,2, ..., K.
\end{equation}
Based on the statistical independence of signatures $\mathbf{s}_k$, the averaged total distortion  per block is defined as $D=\sum_{k=1}^{K} D_k=\sum_{k=1}^{K}A_k^2.$

\subsubsection{Legitimate Extraction}
In the receiver's side, with the knowledge of secrets/carriers $\mathbf{s}_k$ legitimate users can obtain high-quality embedded bit recovery of messages $b_k(m)$. By the auto-correlation matrix of host data and noise, we can define the auto-correlation matrix of observation data $\mathbf{Y}$ as following form
\begin{equation}
\mathbf{R}_{\mathbf{y}}= \mathbf{R}_{\mathbf{x}} + \sum_{k=1}^{K}A_k^2 \mathbf{s}_k\mathbf{s}_k^\top + \sigma_{n}^2 \mathbf{I}_L.\label{ry-1}
\end{equation}
For easy of analysis, equation \eqref{ry-1} can be further written as 
\begin{equation}
\mathbf{R}_{\mathbf{y}}=\frac{1}{M}\mathbf{X}\mathbf{X}^\top+\mathbf{V}\mathbf{V}^\top+\sigma_{n}^2 \mathbf{I}_L
\end{equation}
where $\mathbf{V}=(A_1\mathbf{s}_{1},A_2\mathbf{s}_{2},\cdots,A_K\mathbf{s}_{K})$ and $\mathbf{R}_{\mathbf{x}}=\frac{1}{M}\mathbf{X}\mathbf{X}^\top$.

The linear MMSE detector  has the capable of minimizing the mean square error between the true values and estimated values by taking into account the trade-off between noise amplification and interference suppression \cite{1285069}. Via the linear MMSE filter, the embedded symbols are estimated by 
\begin{equation}
\hat{\mathbf{B}}_{MMSE}= \mathrm{sign}\lbrace (\mathbf{V}^\top \mathbf{R}_{\mathbf{y}}^{-1}) \mathbf{Y}\rbrace. \label{mmse}
\end{equation}
Using sample averaging for $M$ received vectors, the estimate of $\mathbf{R}_\mathbf{y}$ $\hat{\mathbf{R}}_{\mathbf{y}}=\frac{1}{M}\sum_{m=1}^{M}\mathbf{y}\mathbf{y}^\top$ can be obtained. Replace $\mathbf{R}_{\mathbf{y}}$ in \eqref{mmse} with $\hat{\mathbf{R}}_{\mathbf{y}}$, we get sample-matrix-inversion MMSE (SMI-MMSE) detector \cite{ingle2000statisical}.

\subsection{Basics of Lattices}

\subsubsection{Lattice Decoding Problem}
 Lattices are discrete additive subgroups over $m$-dimensional Euclidean space $\mathcal{R}^{m}$, which can be defined as the integer coefficients liner combination of $n$ linearly independent vectors
\begin{equation}
\mathcal{L}( \mathbf{G} )=\left\{ \sum_{k=1}^{K} {x}_{k}\mathbf{g}_{k}\thinspace \mid \thinspace x_{k}\in\mathbb{Z}\right\}
\end{equation}
where $\mathbf{G} \triangleq [\mathbf{g}_{1},...,\mathbf{g}_{K}]$ is called a lattice basis. 

Computationally hard problems can be defined over lattices. The one related to this work is called the  closest vector problem (CVP) \cite{micciancio2002complexity}: given a query vector $\mathbf{t}$, it asks to find the closest vector to $\mathbf{t}$ from the set of lattice vectors $\mathcal{L}(\mathbf{G})$. Let the closest vector be $\mathbf{G}\mathbf{x}$, $\mathbf{x}\in \mathbb{Z}^K$, then we have
\begin{equation}\label{CVPeq}
\| \mathbf{G}\mathbf{x}-\mathbf{t} \| \leq \|\mathbf{G}\tilde{\mathbf{x}}-\mathbf{t}\|,\, \forall \tilde{\mathbf{x}} \in\mathbb{Z}^{K}.
\end{equation}
In general solving CVP for a random lattice basis incurs exponential computational complexity in the order of $\mathcal{O}(2^K)$, but for lattice basis whose $\mathbf{g}_{1},...,\mathbf{g}_{K}$ are close to being orthogonal, fast low-complexity algorithm can approximately achieve the performance of maximum likelihood decoding. 

\subsubsection{Lattice Decoding Algorithms}
Zero-forcing (ZF) \cite{ling2011proximity} and successive interference cancellation (SIC) \cite{lyu2021lattice} are fast low-complexity algorithms to detect the transmitted signals at the receiving end. The former obtains the output by multiplying the pseudo-inverse of $\mathbf{V}$ to the left of $\mathbf{Y}$. The latter introduces decision feedback to decode each symbol successively, achieving better performance than the former. 

\begin{figure}[h]
	\centering
	\includegraphics[width=0.38\textwidth]{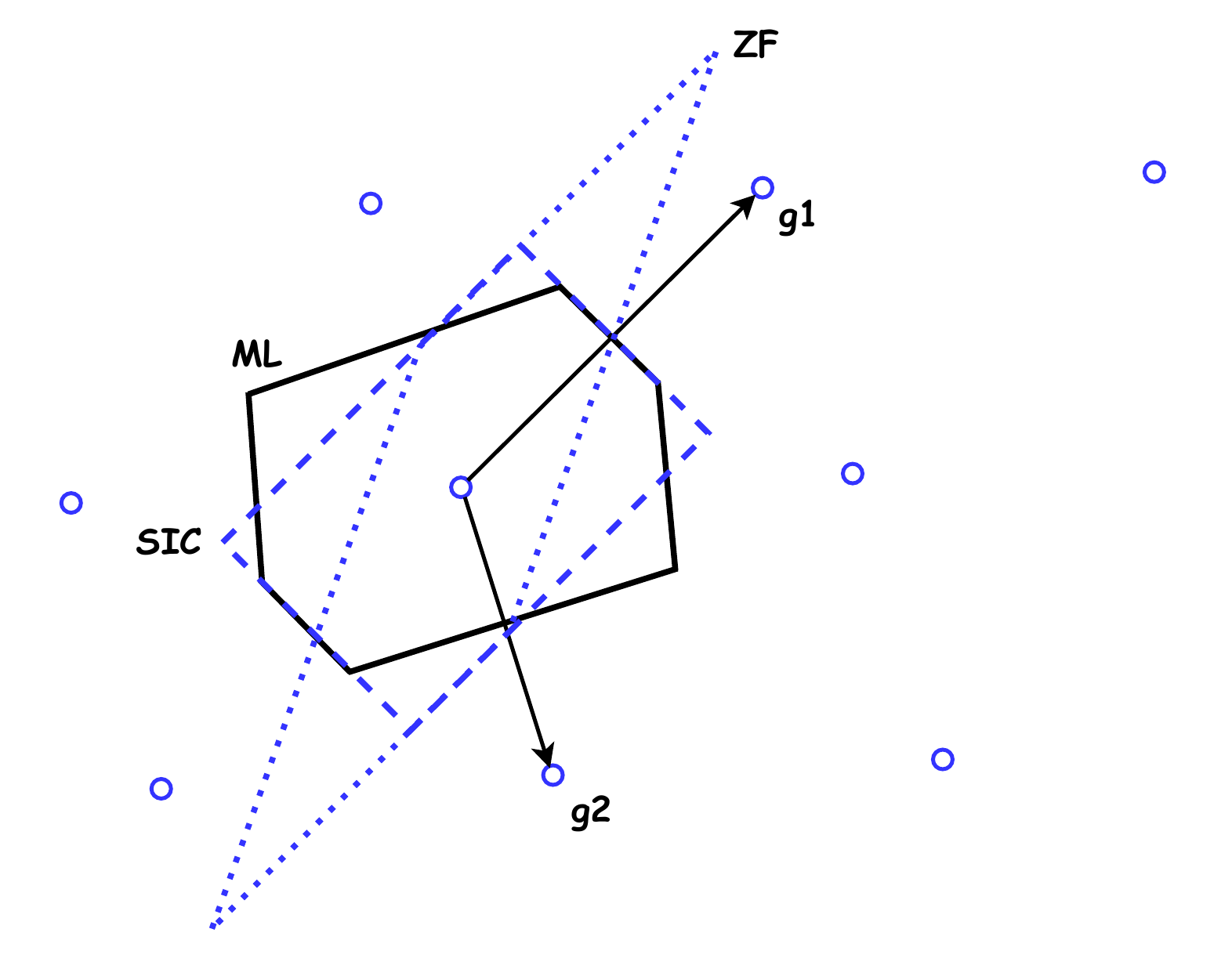}
	\caption{The decision regions of ZF (parallelogram) and SIC (rectangle) in a 2-dimensional lattice.}
	\label{desreg}
\end{figure} 

Fig. \ref{desreg} plots the decision boundaries for ZF and SIC. The elongated and narrow parallelogram is the decision region of ZF. Because the basis vectors are highly correlated, a slight perturbation of the noise can lead to a detection error. For SIC, the decision region is rectangle as only one symbol is decoded at a time \cite{1188114}.
Both ZF and SIC have worse performance than the optimal maximum-likelihood (ML) estimation due to their inherent nature of polynomial complexity.  More comparisons of ZF and SIC are presented in the section of blind extraction, while the application of SD will be addressed in the non-blind extraction.
  
\section{Blind Extraction}
The task of blind extraction requires estimating both $\mathbf{V}$ and $\mathbf{B}$ from the observation $\mathbf{Y}$, which is known as the noisy BSS problem:
\begin{equation}\label{eq_ob2}
\mathcal{P}_{1}: \min_{\mathbf{B} \in \{\pm 1\}^{K \times M} \atop \mathbf{V}\in\mathbb{R}^{L \times K}} || \mathbf{R}_{\mathbf{z}}^{-\frac{1}{2}}\mathbf{(Y-VB)} ||^{2}_{F},
\end{equation}
where $\mathbf{R}_{\mathbf{z}} \triangleq \mathbf{R}_{\mathbf{x}} + \sigma_{n}^2 \mathbf{I}_L$ denotes the pre-whitening matrix. Nevertheless, enumerating all the feasible candidates of $\mathbf{V}$ and $\mathbf{B}$ is infeasible as it incurs  exponential complexity. 


In the following, we briefly  describe the   M-IGLS that was proposed in \cite{li2013extracting} to solve $\mathcal{P}_{1}$. Then we improve the ZF detector in M-IGLS from the viewpoint of lattices.

\subsection{M-IGLS}
The pseudo-code of  M-IGLS is shown in Algorithm \ref{IGLS}.
Specifically, M-IGLS estimates  $\mathbf{V}$ and $\mathbf{B}$  iteratively by using an MMSE criterion: by either fixing $\mathbf{B}^{(d)}$  or $\mathbf{V}^{(d)}$ and using convex optimization, the formulas for $\mathbf{B}^{(d)}$  or $\mathbf{V}^{(d)}$ are derived.

\begin{algorithm}[t!] 
	\KwIn{$\mathbf{Y}$, $\mathbf{R}_{\mathbf{y}}$.} 
	\KwOut{$\hat{\mathbf{V}}=\mathbf{V}^{(d)}$, $\hat{\mathbf{B}}=\mathbf{B}^{(d)}$.}  
	$d=0$, $\mathbf{B}^{(0)}{\sim \{\pm 1\}^{K \times M}}$\;
	\While{ a stopping criterion has not been reached}{ $d\leftarrow d+1$ \;
		$\mathbf{V}^{(d)}\leftarrow\mathbf{Y}(\mathbf{B}^{(d-1)})^\mathrm{T}[\mathbf{B}^{(d-1)}(\mathbf{B}^{(d-1)})^\mathrm{T}]^\mathrm{-1}$\;
		$\mathbf{B}^{(d)}\leftarrow\mathrm{sign}\left\{\left((\mathbf{V}^{(d)})^\mathrm{T}\mathbf{R}_{\mathbf{y}}^{-1}\mathbf{V}^{(d)}\right)^\mathrm{-1}
		(\mathbf{V}^{(d)})^\mathrm{T}\mathbf{R}_{\mathbf{y}}^{-1}\mathbf{Y}\right\}$; \Comment{Approximate lattice decoding via GLS/ZF.}
	} 
	\caption{The M-IGLS data extraction algorithm.}
	\label{IGLS} 
\end{algorithm}
Observe the step of estimating $\mathbf{B}^{(d)}$ in Algorithm \ref{IGLS}, which asks to solve the following problem:
\begin{equation}\label{eq_ob3}
\mathcal{P}_{2}: \min_{\mathbf{B} \in \{\pm 1\}^{K \times M}} || \mathbf{R}_{\mathbf{z}}^{-\frac{1}{2}}\mathbf{Y} - \mathbf{R}_{\mathbf{z}}^{-\frac{1}{2}}\mathbf{V}\mathbf{B}||^{2}_{F}.
\end{equation}
Since $\{\pm 1\}^{K \times M} \in \mathbb{Z}^{K \times M}$,
$\mathcal{P}_{2}$ is a special case of CVP, which asks to find $M$ closest lattice vectors to $ \mathbf{R}_{\mathbf{z}}^{-\frac{1}{2}}\mathbf{Y}$, and the lattice is defined by basis $\mathbf{R}_{\mathbf{z}}^{-\frac{1}{2}}\mathbf{V}$. Considering $\mathcal{P}_{2}$, define the set of query vectors as $\overline{\mathbf{Y}} \triangleq \mathbf{R}_{\mathbf{z}}^{-\frac{1}{2}}\mathbf{Y}$, and the lattice basis as $\overline{\mathbf{V}} \triangleq \mathbf{R}_{\mathbf{z}}^{-\frac{1}{2}}\mathbf{V}$, then the ZF estimator is 
\begin{align}
\hat{\mathbf{B}}_{\mathrm{ZF}} & = \overline{\mathbf{V}}^{\dagger}\overline{\mathbf{Y}}^\mathrm{T} \nonumber \\
&	=  (\overline{\mathbf{V}}^\mathrm{T}\overline{\mathbf{V}})^{-1}\overline{\mathbf{V}}^\mathrm{T}\overline{\mathbf{Y}}^\mathrm{T}.
\end{align}

In Appendix A, we show that the geometric least square (GLS) step in line $5$ of M-IGLS is the same as ZF. The ZF estimator is linear, which behaves like a linear filter and separates the data streams and thereafter independently decodes each stream. The drawback of ZF is the effect of noise amplification when the lattice basis $\overline{\mathbf{V}}$ is not orthogonal. 

\subsection{M-ISIC}
By using decision feedback in the decoding process, the nonlinear Successive Interference Cancellation (SIC) detector has better performance than ZF. Recall that for $\mathcal{P}_{2}$, the lattice basis is $\overline{\mathbf{V}}$, and the set of query vectors are $\overline{\mathbf{Y}}$. The SIC algorithm consists of the following steps:

\noindent \textbf{Step i)}  
Use QR decomposition to factorize $\overline{\mathbf{V}}$: $\overline{\mathbf{V}}=\mathbf{QR}$ \footnote{For better performance, this paper adopts a sorted version of QR decomposition, where the column vectors in $\overline{\mathbf{V}}$ are sorted from short to long.}, where   $\mathbf{Q}\in\mathbb{R}^{L \times L}$ denotes a unitary matrix  and $\mathbf{R}\in\mathbb{R}^{L \times K}$ is an upper triangular matrix of the form:
\begin{equation}\label{Rdef}
\mathbf{R}= \begin{bmatrix} R_{1,1} & R_{1,2} & \cdots & R_{1,K}\\
0 & R_{2,2} & \cdots & R_{2,K}\\
\vdots & \vdots & \ddots & \vdots \\
0 & 0 & \cdots & R_{K,K}\\
0 & 0 & \cdots & 0\\
\vdots & \vdots & \ddots & \vdots \\
0 & 0 & \cdots & 0\\
\end{bmatrix}.
\end{equation}

\noindent \textbf{Step ii)}   Construct  $\mathbf{Y}^{'}=\mathbf{Q}^{\top}\overline{\mathbf{Y}}\in\mathbb{R}^{L \times M}$, which consists of vectors $\mathbf{y}^{'}(1), ...,  \mathbf{y}^{'}(M)$.

\noindent \textbf{Step iii)} For $m=1, ..., M$, generate the estimation as
\begin{align}
\hat{b}_K(m) &=\mathrm{sign}\left(\frac{y_K'(m)}{R_{K,K}}\right), \label{sic_s1}\\
\hat{b}_k(m) &= \mathrm{sign}\left(\frac{y_k'(m)- \sum_{l=k+1}^{K}R_{k,l}\hat{b}_l(m)}{R_{k,k}}\right), \label{sic_s2}
\end{align}
where $k= K-1, K-2, ..., 1$, and $y_k'(m)$ denotes the $k$th component of  $\mathbf{y}^{'}(m)$.

By substituting the Step 5 in Algorithm \ref{IGLS} with the SIC steps, we obtain a new algorithm referred to as multi-carrier iterative successive interference cancellation (M-ISIC). Its pseudo-codes are presented in Algorithm \ref{ISIC}.  Notably, $\mathbf{V}^{(d)}$ is estimated in the same way as that of M-IGLS, and the performance improvements rely on SIC decoding. The stopping criterion can be set as when $||\mathbf{B}^{(d)}-\mathbf{B}^{(d-1)}||^{2}_{F} < 10^{-5}$.
\begin{algorithm}[t!]
	\KwIn{$\mathbf{Y}$, $\mathbf{R}_{\mathbf{z}}$.} 
	\KwOut{$\hat{\mathbf{V}}=\mathbf{V}^{(d)}$, $\hat{\mathbf{B}}=\mathbf{B}^{(d)}$.}  
	$d=0$, $\mathbf{B}^{(0)}{\sim \{\pm 1\}^{K \times M}}$\;
	\While{ a stopping criterion has not been reached}{ $d\leftarrow d+1$ \;
		$\mathbf{V}^{(d)}\leftarrow\mathbf{Y}(\mathbf{B}^{(d-1)})^\mathrm{T}[\mathbf{B}^{(d-1)}(\mathbf{B}^{(d-1)})^\mathrm{T}]^\mathrm{-1}$\;
		Employ Steps i)-iii) of SIC to estimate $\mathbf{B}^{(d)}$ \Comment{Approximate lattice decoding via SIC.}
	} 
	\caption{The M-ISIC data extraction algorithm.}
	\label{ISIC} 
\end{algorithm}

\begin{rem}
	The rationale of SIC is explained as follows.
	When detecting multiple symbols, if one of them can be estimated first, the interference caused by the already decoded can be eliminated when solving another, so as to reduce the effective noise of the symbol to be solved and to improve the bit error rate performance. To be concise, 
	denote the observation equation corresponding to $\mathcal{P}_{2}$ as 
	\begin{equation}\label{SIC_equival}
	\overline{\mathbf{Y}} = \overline{\mathbf{V}}\mathbf{B} + \overline{\mathbf{Z}}, 
	\end{equation}
	with $\overline{\mathbf{Z}}$ being the effective noise. Then the multiplication of $\mathbf{Q}^{\top}$ to (\ref{SIC_equival})
	is simply a rotation, which maintain the Frobenius norm of the objective function:
	\begin{align}
	||\overline{\mathbf{Y}} - \overline{\mathbf{V}}\mathbf{B} ||^{2}_{F} &= ||\overline{\mathbf{Z}} ||^{2}_{F}\\
	&=  ||\mathbf{Q}^{\top} \overline{\mathbf{Z}} ||^{2}_{F}\\
	&=  	||\mathbf{Q}^{\top} \overline{\mathbf{Y}} -  {\mathbf{R}}\mathbf{B}||^{2}_{F}.
	\end{align}
	Regarding Step iii), $\hat{b}_K(m), ..., \hat{b}_1(m)$ are estimated in descending order because the interference caused by these symbols can be canceled. Moreover, the divisions of 
	$R_{K,K}, ..., R_{1,1}$  in Eqs. (\ref{sic_s1})  (\ref{sic_s2}) imply that the effective noise level hinges on the quality of $R_{K,K}, ..., R_{1,1}$.
\end{rem}

\subsection{Performance Analysis}
We show that M-ISIC theoretically outperforms M-IGLS,  as SIC has better decoding performance than ZF when approximately solving $\mathcal{P}_{2}$.  With a slight abuse of notations, $\mathcal{P}_{2}$ can be simplified as $M$ instances of the following observation:
\begin{equation}\label{eq4sicproof}
\mathbf{y} = \mathbf{R}'\mathbf{b}^*+\mathbf{z}
\end{equation}
where $\mathbf{y} \in \mathbb{R}^K$, $\mathbf{b}^* \in \{\pm 1\}^{K}$ is the transmitted message, $\mathbf{R}' \in \mathbb{R}^{K\times K}$ includes only the first $K$ rows of 
(\ref{Rdef}), and  we assume that $\mathbf{z}$ also admits a Gaussian distribution with  mean $\mathbf{0}$ and covariance $\sigma_{n}^2 \mathbf{I}_K$. Then the lattice decoding task becomes
\begin{equation}\label{eq_ob4}
\mathcal{P}_{\mathrm{3}}: \min_{\mathbf{b} \in \{\pm 1\}^{K}} || \mathbf{y} - \mathbf{R}\mathbf{b} ||^{2}.
\end{equation}

It has been demonstrated in the literature \cite{ling2011proximity,9497074} that SIC outperforms ZF if the constraint of $\mathbf{b}$ in $\mathcal{P}_{\mathrm{3}}$ is an integer set $\mathbb{Z}^K$ and a box-constrained (truncated continuous integer) set $\mathcal{B}$. Therefore, we employ a model reduction technique to show that SIC has higher success probability when decoding $\mathcal{P}_{\mathrm{3}}$.

\begin{prop}\label{prop2}
	Let the SIC and ZF estimates of $\mathcal{P}_{\mathrm{3}}$ be
	$\mathbf{b}^{\mathrm{SIC}}$ and $\mathbf{b}^{\mathrm{ZF}}$, respectively. Then the averaged decoding success probability of SIC is higher than that of ZF:
	\begin{equation}\label{PropSICzf}
	\mathbb{E}_{\mathbf{b}^*}\lbrace \mathrm{Pr}(\mathbf{b}^{\mathrm{SIC}}=\mathbf{b}^*)  \rbrace \geq \mathbb{E}_{\mathbf{b}^*}\lbrace \mathrm{Pr}(\mathbf{b}^{\mathrm{ZF}}=\mathbf{b}^*) \rbrace,
	\end{equation}
	where the expectation is taken over uniform random $\mathbf{b}^* \in \{\pm 1\}^{K}$.
\end{prop}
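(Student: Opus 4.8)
My plan is to reduce each decoder to an explicit ``success event'' in the noise $\mathbf{z}$, symmetrize over the transmitted pattern $\mathbf{b}^*$, and then compare the two resulting Gaussian measures geometrically. First, I would pin down when each decoder is \emph{exactly} correct. Because $\mathbf{R}'$ is upper triangular with positive diagonal (guaranteed by the sorted QR of Step i), the event ``SIC returns $\mathbf{b}^*$'' is equivalent, layer by layer, to ``the sign decision at layer $k$ is correct given correct feedback from layers $k+1,\dots,K$'': if every layer is correct then the feedback used at each step was the true symbol, so the $k$th decision reduces to $\mathrm{sign}(b_k^*+z_k/R_{k,k})$, and conversely no error can occur inside an all-correct run. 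Hence $\{\mathbf{b}^{\mathrm{SIC}}=\mathbf{b}^*\}=\bigcap_{k}\{b_k^* z_k>-R_{k,k}\}$, and since the $z_k$ are independent each half-line carries mass $\Phi(R_{k,k}/\sigma_n)$ irrespective of the sign $b_k^*$ (with $\Phi$ the standard normal CDF), giving $\mathrm{Pr}(\mathbf{b}^{\mathrm{SIC}}=\mathbf{b}^*)=\prod_k\Phi(R_{k,k}/\sigma_n)$, independent of $\mathbf{b}^*$. For ZF, $\hat{\mathbf{b}}=\mathrm{sign}(\mathbf{b}^*+\mathbf{R}'^{-1}\mathbf{z})$, so $\{\mathbf{b}^{\mathrm{ZF}}=\mathbf{b}^*\}=\bigcap_k\{b_k^*(\mathbf{R}'^{-1}\mathbf{z})_k>-1\}$, which is the skewed parallelogram region of Fig.~\ref{desreg}, whose coordinates are correlated through $\mathbf{R}'^{-1}$.

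Second, I would carry out the averaging over $\mathbf{b}^*$ --- this is the model reduction. Averaging the product of half-line indicators over the $2^K$ sign patterns collapses each coordinate's two hypotheses into a single even factor ($1$ on the inner interval, $\tfrac12$ outside), which expands into a sum over subsets $S\subseteq\{1,\dots,K\}$:
\begin{equation}
\mathbb{E}_{\mathbf{b}^*}\mathrm{Pr}(\mathbf{b}^{\mathrm{ZF}}=\mathbf{b}^*)=\frac{1}{2^{K}}\sum_{S}\mathrm{Pr}_{\mathbf{z}}\big(|(\mathbf{R}'^{-1}\mathbf{z})_k|<1,\ k\in S\big),
\end{equation}
and identically for SIC with $|(\mathbf{R}'^{-1}\mathbf{z})_k|<1$ replaced by the axis-aligned $|z_k|<R_{k,k}$; by independence the SIC term factorizes as $\prod_{k\in S}\mathrm{Pr}(|z_k|<R_{k,k})$. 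It therefore suffices to prove, for every $S$, the per-subset domination of the skewed box by the axis-aligned one, namely $\mathrm{Pr}_{\mathbf{z}}(|(\mathbf{R}'^{-1}\mathbf{z})_k|<1,\ k\in S)\le\prod_{k\in S}\mathrm{Pr}(|z_k|<R_{k,k})$.

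Third, I would establish this per-subset inequality by induction on the dimension, exploiting the triangular structure together with Anderson's inequality (a centered Gaussian assigns maximal mass to a symmetric convex set when that set is centered, so any translate only decreases the mass). Peel off the last coordinate: since $\mathbf{R}'$ is triangular, $\{|(\mathbf{R}'^{-1}\mathbf{z})_K|<1\}=\{|z_K|<R_{K,K}\}$ exactly, and conditioned on $z_K$ the remaining constraints on $z_{1:K-1}$ describe a fixed symmetric convex parallelepiped translated by a vector proportional to $z_K$. Anderson's inequality lets me replace that translate by the centered one, reducing the problem to the $(K-1)$-dimensional instance with the same diagonal $R_{1,1},\dots,R_{K-1,K-1}$; iterating down to $K=1$ (where equality holds) yields the claim, and summing over $S$ gives \eqref{PropSICzf}.

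The main obstacle is exactly this geometric comparison, and it is delicate for a reason worth flagging: one cannot simply bound the joint ZF probability by the product of its own marginals, because the components of $\mathbf{R}'^{-1}\mathbf{z}$ are generally positively correlated and Gaussian correlation inequalities (\v{S}id\'ak/GCI) push that joint probability \emph{above} the product --- the wrong direction. What rescues the argument is that the ZF and SIC regions are parallelepipeds of \emph{equal} volume $\prod_k R_{k,k}$ (one skewed, one axis-aligned), so that Anderson's inequality, applied recursively through the triangular peeling, dominates the skewed body by the aligned one block at a time. This is the finite, $\{\pm1\}$ analogue of the $\mathbb{Z}^{K}$ and box-constrained comparisons of \cite{ling2011proximity,9497074}; note also that SIC error propagation needs no separate treatment, since the event ``all layers correct'' already forces correct feedback at every step.
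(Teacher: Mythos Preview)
Your argument is correct but takes a different route from the paper. The paper's proof is a two-line reduction: the affine substitution $\mathbf{b}\mapsto(\mathbf{b}+\mathbf{1})/2$, $\mathbf{y}\mapsto(\mathbf{y}+\mathbf{R}'\mathbf{1})/2$ maps the bipolar constraint $\{\pm1\}^K$ to the box $\{0,1\}^K$ while preserving both the ZF and the SIC decisions, and \eqref{PropSICzf} is then imported verbatim from the box-constrained comparison theorem of \cite{9497074}. You instead give a self-contained proof: you write out the SIC and ZF success events explicitly, average over $\mathbf{b}^*$ to obtain the subset-sum identity, and dominate each skewed slab intersection by its axis-aligned counterpart via a recursive Anderson-inequality peel along the triangular structure of $\mathbf{R}'$. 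In effect you are reproving, from scratch, the special case of \cite{9497074} that the paper invokes as a black box; your approach is longer but exposes the geometric mechanism (a centered Gaussian prefers the centered symmetric body to any translate, applied layer by layer) and needs no external reference, whereas the paper's reduction is terser and makes clear that the $\{\pm1\}^K$ statement carries no new content beyond the already-established $\{0,1\}^K$ case.
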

\begin{proof}
	Firstly, Eq. (\ref{eq4sicproof}) is rewritten as 
	\begin{equation}
	(\mathbf{y} + \mathbf{R}\times \mathbf{1})/2 = \mathbf{R}(\mathbf{b}^*+\mathbf{1})/2+\mathbf{z}/2.
	\end{equation}
	By updating the query vector $\mathbf{y}$ as $\mathbf{y}'\triangleq (\mathbf{y} + \mathbf{R}\times \mathbf{1})/2$, the bipolar constraint model $\mathcal{P}_{\mathrm{3}}$ is transformed to the following box-constrained model $\mathcal{P}_{\mathrm{4}}$:
	\begin{equation}\label{eq_ob5}
	\mathcal{P}_{\mathrm{4}}: \min_{\mathbf{b} \in \mathcal{B}} || \mathbf{y}' - \mathbf{R}\mathbf{b} ||^{2}, 
	\end{equation}
	where the constraint of the variable is $\mathcal{B}=\{0,1\}^{K}$. Since \cite{9497074}[Thm. 9] has shown that  Eq. (\ref{PropSICzf}) holds in this type of box-constrained model, the proposition is proved.
\end{proof}

If $\overline{\mathbf{V}}$ is close to being an orthogonal matrix, then ZF and SIC detection can both achieve maximum likelihood estimation. The reason is that they are all solving a much simpler quantization problem $ \min_{\mathbf{b} \in \{\pm 1\}^{K}} || \mathbf{y} - \mathbf{I}_K\mathbf{b} ||^{2}.$ In general, the performance gap between ZF and SIC depends on  the degree of orthogonality of the lattice basis  $\overline{\mathbf{V}}$. To quantify this parameter, we introduce the normalized orthogonality defect of a matrix as 
\begin{equation}
\delta(\overline{\mathbf{V}})= \left(\frac{\prod_{k=1}^{K}||\overline{\mathbf{v}}_k||}{\sqrt{\det({\overline{\mathbf{V}}}^\top \overline{\mathbf{V}})}} \right)^{1/K},
\end{equation}
where the column vectors of $\overline{\mathbf{V}}=[\overline{\mathbf{v}}_1, ..., \overline{\mathbf{v}}_K]$ are linear independent.
From Hardamard’s inequality, $\delta(\overline{\mathbf{V}})$
is always larger than or
equal to $1$, with equality if and only if the columns are orthogonal to each other. Summarizing the above, SIC performs better than ZF in general, and their performance gap decreases as  $\delta(\overline{\mathbf{V}}) \rightarrow 1$.

 \subsection{Computational Complexity}
 To compare with M-IGLS and exiting schemes, we give the computational complexity of M-ISIC based on the following conditions:
 \begin{itemize}
 	\item The complexity of the multiplication of two matrices
 	$\mathbf{A}\in\mathbb{R}^{M\times N}$ and $\mathbf{B}\in\mathbb{R}^{N\times K}$ is $\mathcal{O}(MNK)$.
 	\item  The complexity of an inversion over the square matrix   $\mathbf{A}\in\mathbb{R}^{N\times N}$  is $\mathcal{O}(N^{3})$.
 	\item The complexity of performing QR decomposition on matrix $\mathbf{A}\in\mathbb{R}^{M\times N}$, $M > N$, is $\mathcal{O}(2MN^{2})$.
 \end{itemize}
 
 Notice that $\mathbf{Y}\in\mathbb{R}^{L\times M}$, $\mathbf{V}\in\mathbb{R}^{L\times K}$ and $\mathbf{B}\in\mathbb{R}^{K\times M}$,  the computational complexity of Step 4 in M-ISIC is 
 \[\mathcal{O}(K^{3}+K^{2}(L+M)+LMK).\]
 The   computational complexity of Step 5 is dominated by the QR decomposition, which is 
 \[\mathcal{O}\left(K^{2}L+M(LK+K)\right).\]                
 The computational complexity of each iteration of the algorithm is summarized as
 \[ \mathcal{O}\left( K^{3}+2LMK+K^{2}(3L+M) +KM  \right).\]               
 With a total of $T$ iterations, the overall complexity is
 \[ \mathcal{O}\left( T(K^{3}+2LMK+K^{2}(3L+M) +KM ) \right).\]

\section{Non-blind Extraction}
The difference between legitimate/non-blind extraction and blind extraction lies in the availability of $\mathbf{V}$. In the case of 
legitimate extraction, it asks to solve
\begin{equation}\label{eq_ob6}
\mathcal{P}_{4}: \min_{\mathbf{B} \in \{\pm 1\}^{K \times M}} || \mathbf{R}_{\mathbf{z}}^{-\frac{1}{2}}(\mathbf{Y} - \mathbf{V}\mathbf{B})||^{2}_{F}.
\end{equation}  With the knowledge of carriers, non-blind algorithms exhibit higher accuracy than the blind algorithms. 

In this section, we describe the similarity between ZF and MMSE criterion by using an extended system model. To achieve better extraction performance when the channel matrix lacks sufficient orthogonality, we introduce a sphere decoding algorithm to extract SS hidden data. Subsequently, its computational complexity is discussed.

\subsection{Equivalence of linear MMSE and ZF}
 By introducing an extended system model, it is straightforward to show the similarity between linear MMSE and zero forcing (ZF). The channel matrix $\mathbf{V}$ and the received matrix $\mathbf{Y}$ can be reconstructed through
\begin{equation}
\underline{\mathbf{V}} = \begin{bmatrix} \mathbf{V}^\top \hspace{1mm} \sigma_{n} \mathbf{I}_L \hspace{1mm} \frac{1}{\sqrt{M}}\mathbf{X}^\top\\
\end{bmatrix}^\top \quad \text{and}  
\quad \underline{\mathbf{Y}} = \begin{bmatrix} \mathbf{Y}^\top \hspace{1mm} \textbf{0} \end{bmatrix}^\top.
\end{equation}
Therefore, the output of the linear MMSE filter \eqref{mmse} can be re-expressed as
\begin{align}
\hat{\mathbf{B}}_{MMSE} & = \mathrm{sign}\lbrace \underline{\mathbf{V}}^\top(\underline{\mathbf{V}}^\top\underline{\mathbf{V}})^{-1}\underline{\mathbf{Y}}\rbrace \\
& = \mathrm{sign}\lbrace \underline{\mathbf{V}}^\dagger\underline{\mathbf{Y}}\rbrace.\label{extend_zf}
\end{align} 
It is not difficult to find that \eqref{extend_zf} is analogous to the familiar linear zero forcing detector $\hat{\mathbf{B}}_{ZF} = \mathrm{sign}\lbrace\mathbf{V}^\dagger\mathbf{Y}\rbrace $ \cite{1285069}, except that $\mathbf{V}$ and $\mathbf{Y}$ are replaced by $\underline{\mathbf{V}}$ and $\underline{\mathbf{Y}}$ respectively. 

Since $\mathcal{P}_{4}$ amounts to the CVP of lattices, there is no free lunch for linear complexity algorithms (such as ZF,  SIC, linear MMSE) to provide high-quality or exact solutions for $\mathcal{P}_{4}$. The default linear MMSE algorithm shows satisfactory performance only when $\mathbf{V}$ has sufficient orthogonality (i.e., $\mathbf{V}^\top \mathbf{V}$ is close to an identity matrix). To solve $\mathcal{P}_{4}$ exactly, exponential-complexity algorithms are indispensable.

\subsection{Sphere Decoding}
\begin{figure}[t!]
	\centering
	\includegraphics[width=0.35\textwidth]{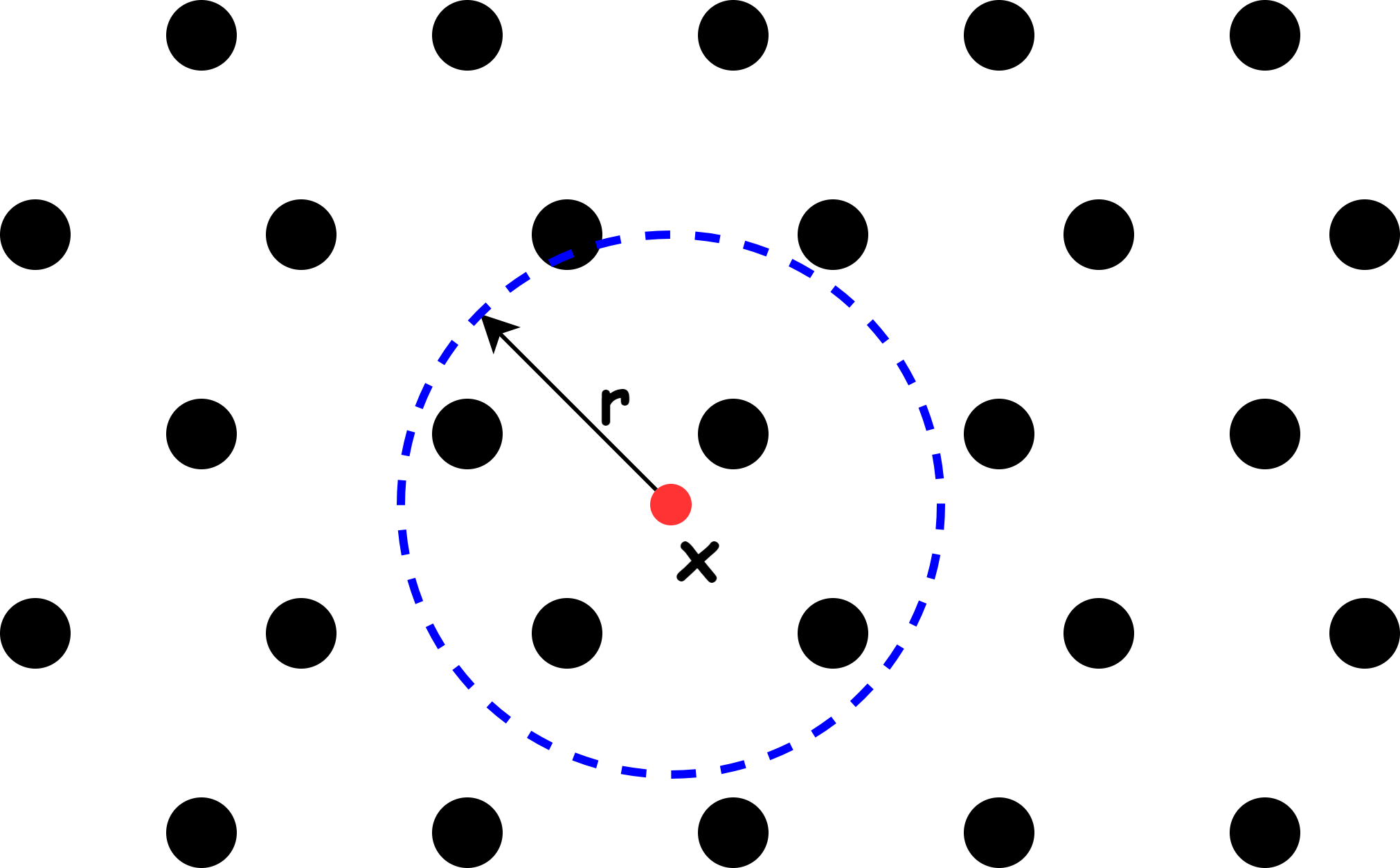}
	\caption{Sphere decoding in 2-dimensional Euclidean space.}
	\label{sd_r}
\end{figure} 
Sphere decoding is a popular approach to solve CVP in the lattice community \cite{micciancio2002complexity}. We only have to adjust a few steps of the conventional sphere decoding to solve $\mathcal{P}_{4}$: i) The quantization of symbols is to $\lbrace \pm 1 \rbrace$ rather than $\mathbb{Z}$. ii) We initialize the initial search radius of sphere decoding via the solution of linear MMSE.


The principle behind sphere decoding is quite simple: we attempt to search over those lattice points that lie inside a sphere with radius $r$ and find the nearest vector to the given query vector $\mathbf{x}$. An example in 2-dimensional space is given in Fig. \ref{sd_r}.  If the channel matrix $\overline{\mathbf{V}}$ represents lattice basis, then the lattice point $\overline{\mathbf{V}}\mathbf{b}$ is located in the sphere of radius $d$ and center $\mathbf{x}$ if and only if
\begin{equation}\label{d}
d^2 \geq \|\overline{\mathbf{y}} - \overline{\mathbf{V}}\mathbf{b}\|^2, \quad \mathbf{b}\in \{\pm 1\}^K
\end{equation}
where $\|\cdot\|$ represents Euclidean norm, and $\overline{\mathbf{y}}$ and $\mathbf{b}$ are the columns of $\overline{\mathbf{Y}}$ and $\mathbf{B}$ respectively. Although it seems complicated to determinate which lattice points are contained within the sphere in $m$-dimensional space,  it becomes effortless to do so when $m=1$. The reason is that the sphere degenerates into a fixed length interval in one-dimension. Then the lattice points are the integer values falling in the interval centered on $\mathbf{x}$. With this basic observation, we can generalize from $k$ dimension to $k+1$ dimension. Assume that the lattice points contained in the sphere with radius $r$ are obtained. Then, for the sphere with the same radius in $k+1$ dimension, the desirable values of the $k+1$ coordinate of these lattice points form a finite set or a fixed length interval. It implies that one can obtain the lattice points contained in the sphere with radius $r$ in $m$ dimension by means of solving all the lattice points in the sphere of the same radius from $1,\cdots,m-1$ dimension successively. 

Through the above brief introduction, the sphere decoding algorithm consists of the following steps: \\
\noindent \textbf{Step i)} 
Perform QR decomposition to factorize $\overline{\mathbf{V}}$: $\overline{\mathbf{V}}=\mathbf{QR}$.$\mathbf{Q}=\begin{bmatrix}
\mathbf{Q}_1 \hspace{0.6mm} \mathbf{Q}_2
\end{bmatrix}\in\mathcal{R}^{L \times L}$ denotes a unitary matrix with pair-wise orthogonal columns  and $\mathbf{R}\in\mathcal{R}^{L \times K}$ denotes an upper triangular matrix.

\noindent \textbf{Step ii)}  On the basis of QR decomposition, \eqref{d} can be rewritten as
\begin{align}\label{rewrit_d}
d^2 & \geq \|\overline{\mathbf{y}} - \begin{bmatrix}\mathbf{Q}_{1}\hspace{1mm}\mathbf{Q}_{2}\end{bmatrix}\begin{bmatrix}
\mathbf{R}_{1}\\
\textbf{0}
\end{bmatrix}\mathbf{b}\|^2=\|\begin{bmatrix}
\mathbf{Q}_{1}^{H}\\
\mathbf{Q}_{2}^{H}
\end{bmatrix}\overline{\mathbf{y}}-\begin{bmatrix}
\mathbf{R}_{1} \\ 
\textbf{0}
\end{bmatrix}\mathbf{b}\|^2 \\
& = \|\mathbf{Q}_{1}^{H}\overline{\mathbf{y}}-\mathbf{R}_{1}\mathbf{b}\|^2 + \|\mathbf{Q}_{2}^{H}\overline{\mathbf{y}}\|^2\label{dqr}
\end{align}
where $(\cdot)^{H}$ denotes Hermitian transpose. To simplify the symbol, let us define $\overline{d}^{2}=d^2-\|\mathbf{Q}_{2}^{H}\overline{\mathbf{y}}\|^2$ and $\overline{\mathbf{y}}=\mathbf{Q}_{1}^{H}\overline{\mathbf{y}}$ to represent \eqref{dqr} as
\begin{align}\label{dbar}
\overline{d}^2 &\geq \|\overline{\mathbf{y}}-\mathbf{R}_{1}\mathbf{b} \|^2.
\end{align}
In accordance with the upper triangular property of $\mathbf{R}_{1}$, the right hand side of \eqref{dbar} can be expanded to a polynomial
\begin{align}
\overline{d}^{2} &\geq (\overline{y}_{K}-R_{K,K}
b_{K})^2+(\overline{y}_{K-1}-\sum_{j=K-1}^{K}R_{K-1,j}b_{j})^2 \nonumber \\ 
& + \cdots + (\overline{y}_{1}-\sum_{j=1}^{K}R_{1,j}b_{j})^2. \label{expand}
\end{align}

\noindent \textbf{Step iii)} 
Provided that $\mathbf{Y}$ and $\mathbf{V}$ are known, then $\overline{\mathbf{y}}$ is also known for the receiver. By observing the right hand side of \eqref{expand}, it is straightforward to deduce that the first term of \eqref{expand} only hinges on $\{b_{K}\}$, while the second hinges on $\{b_{K},b_{K-1}\}$, and so on. If the admissible values of $b_{K}$ have been estimated, the decoder will exploit this set to further estimate $b_{K-1}$. Let's start from one dimension. The first necessary condition for $\mathbf{Vb}$ to fall in the sphere is $\overline{d}^2 \geq (\overline{y}_{K}-R_{K,K}b_{K})^2$, i.e., $b_{K} $ must meet
\begin{equation}\label{c1}
\lceil\dfrac{-\overline{d}+\overline{y}_{K}}{R_{K,K}}\rceil \leq b_{K} \leq \lfloor\dfrac{\overline{d}+\overline{y}_{K}}{R_{K,K}}\rfloor
\end{equation}
where $\lceil\cdot\rceil$ and $\lfloor\cdot\rfloor$ denote rounding up and rounding down, respectively. There is no doubt that $\eqref{c1}$ is definitely not sufficient enough. We need stronger constraints in order to keep the search space shrinking. For any $b_{K}$ satisfying \eqref{c1}, let $\overline{d}_{K}^2=\overline{d}^2-(\overline{y}_{K}-R_{K,K}
b_{K})^2$ and $\overline{y}_{K-1}^{'}=\overline{y}_{K-1}-R_{K-1,K}b_{K}$, the integer interval that $b_{K-1}$ belongs to can be found.
\begin{equation}\label{c2}
\lceil\dfrac{-\overline{d}_{K}+\overline{y}_{K-1}^{'}}{R_{K-1,K-1}}\rceil \leq b_{K-1} \leq \lfloor\dfrac{\overline{d}_{k}+\overline{y}_{K-1}^{'}}{R_{K-1,K-1}}\rfloor
\end{equation}
Similarly, the intervals that the remaining symbols $b_{K-2},\cdots,b_{1}$ belong to can be calculated in the same way recursively. Finally, we obtain all the candidate lattice points, the potential closest vectors to $\overline{\mathbf{y}}$, after the program terminates.

Algorithm \ref{msd} gives the pseudo-code of sphere decoding. The radius is initialized to the solution of linear MMSE. As the message space is $\{\pm 1\}$, the rounding operation simply invokes $sign(\cdot)$.

\begin{algorithm}[t!]
	\KwIn{$\overline{\mathbf{y}}=\mathbf{Q}_{1}^{H}\overline{\mathbf{y}}$, $\mathbf{R}_1$, $Radius$.} 
	\KwOut{$\hat{\mathbf{b}}$.}  
	
	$Initialization: K=size(\mathbf{R}_1,2), dist=0, k=K,  \hat{\mathbf{b}}=zeros(K,1)$\;
	
	\eIf{$k==K$}{$\overline{\mathbf{y}}'=\overline{\mathbf{y}}$;}{$\overline{\mathbf{y}}'=\overline{\mathbf{y}}-\mathbf{R}_{1}(:,k+1:end)*\hat{\mathbf{b}}(k+1:end)$;}
	
	$c=sign\left( \dfrac{\overline{\mathbf{y}}(k)}{\mathbf{R}_{1}(k,k)} \right), \hat{\mathbf{b}}(k)=c$;
	
	$\overline{d}^2$=$\left( \overline{\mathbf{y}}'(k)-\mathbf{R}_{1}(k,k:end)\hat{\mathbf{b}}(k:end)  \right)^2+dist$;
	
	\eIf{$\overline{d}^2 \leq Radius$}{go to 12;}{go to 19;}
	
	\eIf{$k==1$}{save $\hat{\mathbf{b}}$\;
		$Radius=\overline{d}^2;$}{$k=k-1$\; $dist=\overline{d}^2$\; go to 2;}
	
	$ci=c*(-1)$, $\hat{\mathbf{b}}(k)=ci$, go to 7;
	
	\caption{The M-SD data extraction algorithm.}
	\label{msd} 
\end{algorithm}

\begin{figure}[t!]
	\centering
	\includegraphics[width=0.48\textwidth]{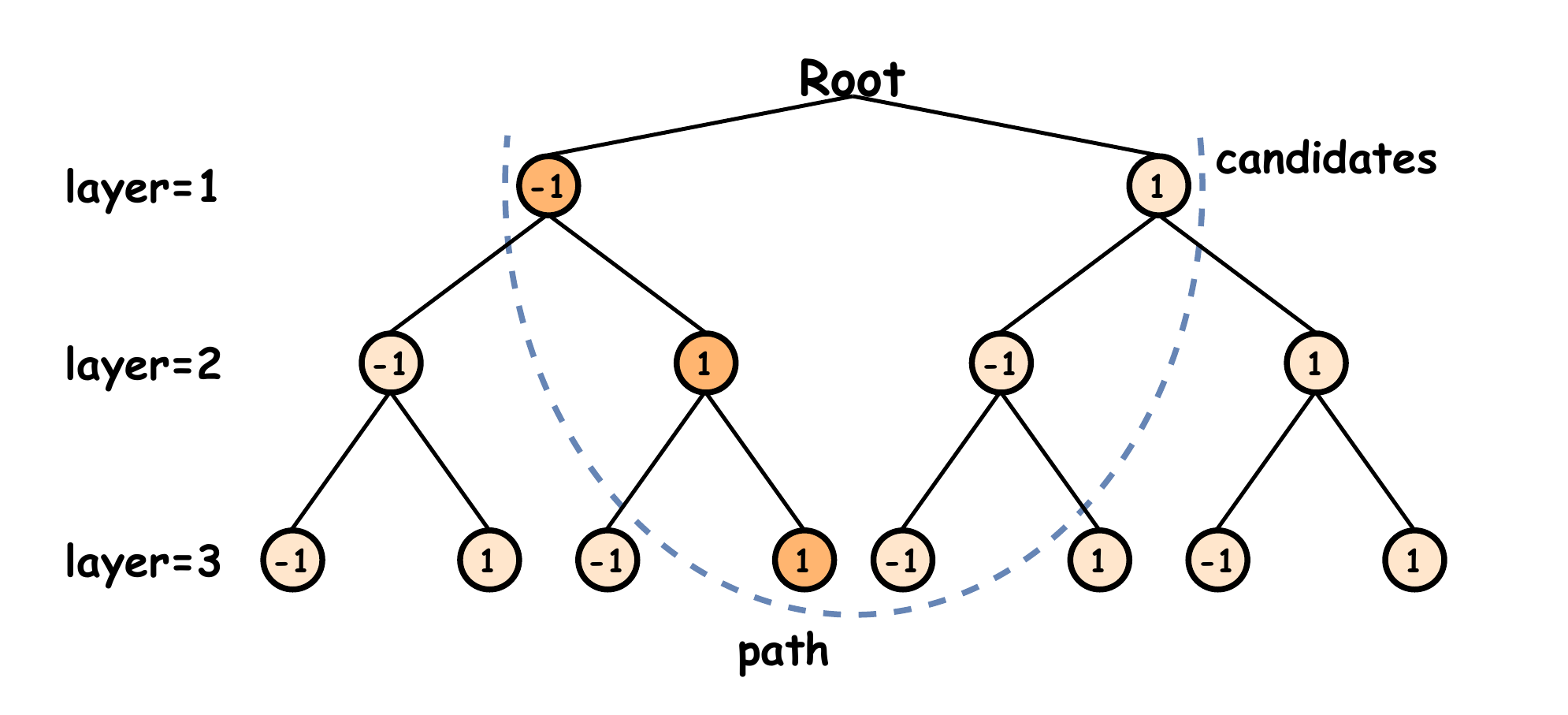}
	\caption{A 3-dimensional binary search tree.}
	\label{bina_tree}
\end{figure}

\subsection{Computational Complexity}
The computational complexity of sphere decoding has been studied thoroughly in the literature \cite{fincke1985improved,hassibi2005sphere}. In the worst case, we have to visit all $2^{K}$ nodes. But in generally the actual complexity is significantly smaller than that, as the algorithm is constantly updating the search radius. According to \cite{hassibi2005sphere}, the expected complexity of sphere decoding is of polynomial-time.   Fig. \ref{bina_tree} shows a binary search tree in 3-dimensional space, where the nodes in $k$-th layers correspond to the lattice points in $k$-dimension and the height of tree is $K$.
 
\begin{figure}[h!]
	\centering    
	\subfigure[No.0] { 
		\label{No.0}     
		\includegraphics[width=2.6cm]{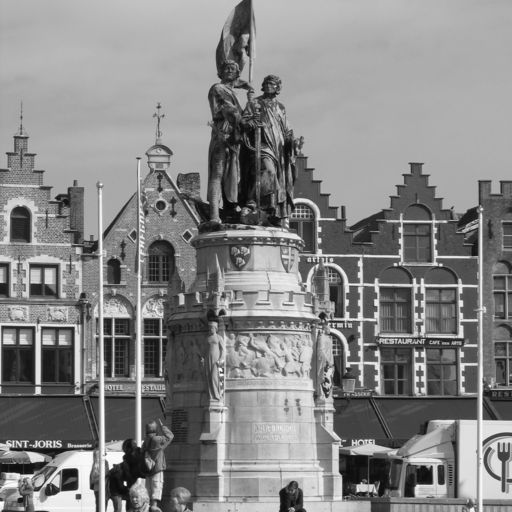}     
	}   
	\subfigure[No.709] {
		\label{No.709}     
		\includegraphics[width=2.6cm]{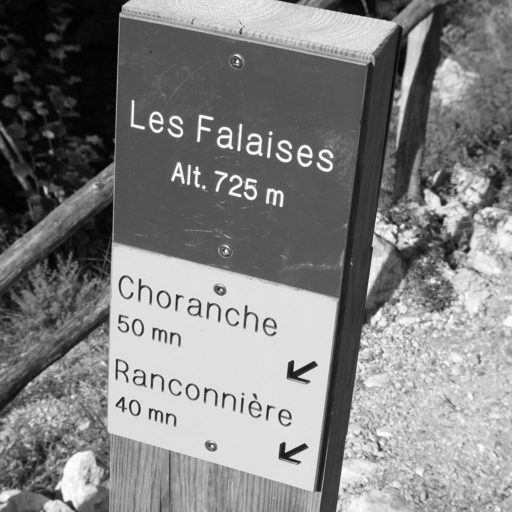}  
	}     
	\subfigure[No.5752] { 
		\label{No.5752}     
		\includegraphics[width=2.6cm]{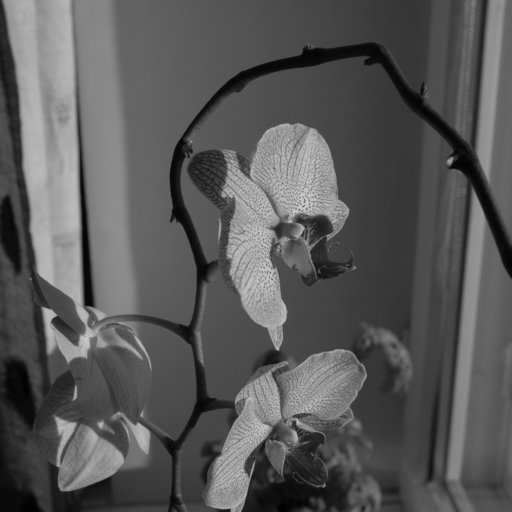}     
	}   
	\subfigure[No.6088] { 
		\label{No.6088}     
		\includegraphics[width=2.6cm]{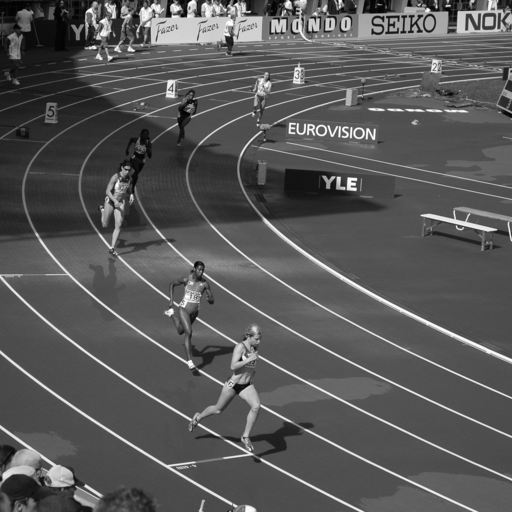}     
	}   
	\subfigure[No.6120] {
		\label{No.6120}     
		\includegraphics[width=2.6cm]{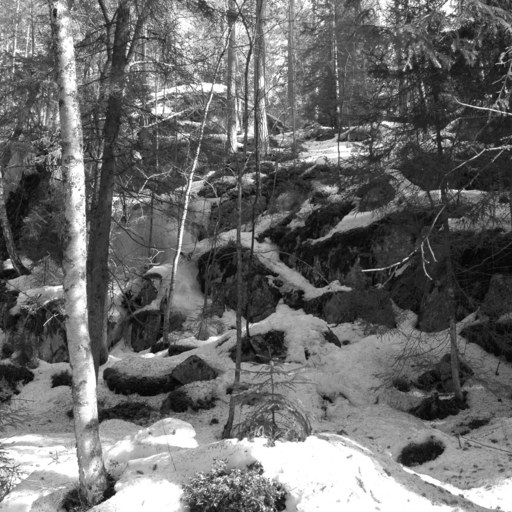}  
	}     
	\subfigure[No.6148] { 
		\label{No.6148}     
		\includegraphics[width=2.6cm]{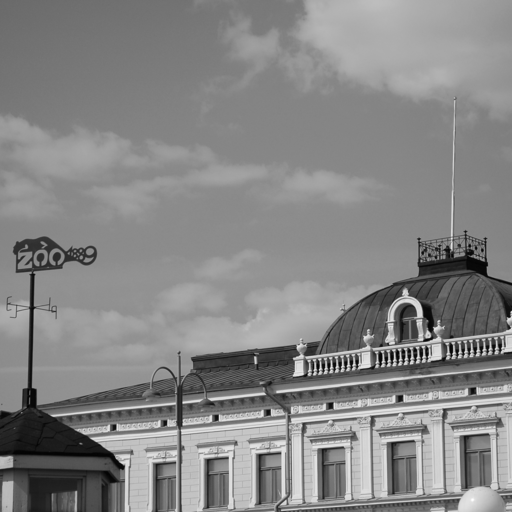}     
	} 
	\caption{Representative images in the BOWS-2 database.}     
	\label{stimage}     
\end{figure}

\begin{figure}[h!]
	\centering    
	\includegraphics[width=0.48\textwidth]{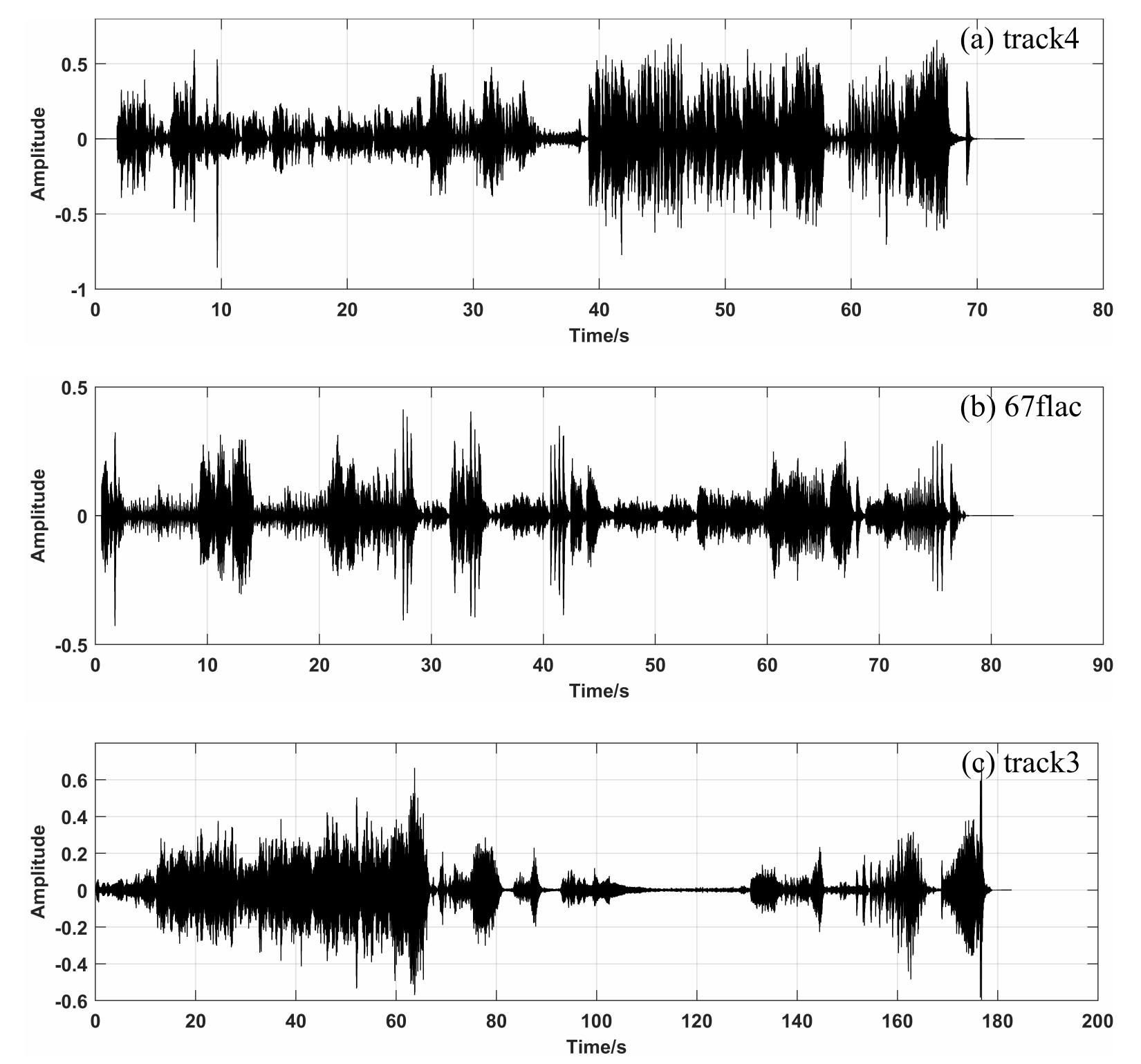}     
	\caption{Representative audio signals in \cite{audio1} and \cite{audio2}.}     
	\label{stiaudio}     
\end{figure}

\section{Experimental Studies}
This section performs numerical simulations to validate the effectiveness and accuracy of the algorithms we proposed. The simulations investigate the scenarios with blind extraction in part A and non-blind extraction in part B separately. The experimental setup is described in detail below.

\textbf{Datasets}: Without loss of generality, we use the images in BOWS-2 \cite{timmurphy.org} database and the audios in \cite{audio1} and \cite{audio2} as the embedding covers. BOWS-2 database consists of $10,000$ grey-level images, with different statistical properties. Fig. \ref{stimage} displays some typical samples, and the labels of the subgraphs indicate their ordinal numbers in the dataset. Audio datasets consist of 9 MP3 and 70 FLAC files, containing different types of audio. Fig. \ref{stiaudio} shows three audio signals samples, and the labels of the subgraphs represents their file name in datasets. 

\textbf{Indicator}: The bit-error-rate (BER), as a common performance index, is employed to measure the extraction performance. 

\textbf{Preprocessing}: By performing $8 \times 8$ block-wise DCT transform, zig-zag scanning and coefficient selection on the original images, we obtain transform domain hosts for embedding. Then invoking additive SS embedding, the watermarked images are generated. Similarly, we can use the same process to generate the watermarked audio signals.

For image cover, the entries in the matrix $\mathbf{V}$ are taken from standard Gaussian distribution; while for audio, the elements in the matrix $\mathbf{V}$ are taken from $\{-1,1\}$. The normalized orthogonality defect of the simulated carriers are shown in Table 1. By varying the size of $L\times K$, the carriers exhibit different $\delta(\overline{\mathbf{V}})$. 
The noise power of image and audio are fixed as $\sigma_{n}^2=3$ and $\sigma_{n}^2=1$ respectively, and the signal-to-noise ratio is controlled by varying the distortion $D$. 

\renewcommand\arraystretch{2.0}
\begin{table}[t!]
	\caption{The normalized orthogonality defect of the simulated carriers.}
	\vspace{-2mm}
	\begin{center}
		\begin{tabular}{c|c|c|c|c|c|c}
			\hline 
			$L\times K$ & $8\times8$ & $12\times10$ & $15\times12$ & $10\times4$ & $12\times6$ & $15\times8$\tabularnewline
			\hline 
			\hline 
			$\delta(\overline{\mathbf{V}})$ & $1.6887$ & $1.3819$ & $1.3475$ & $1.0892$ & $1.1417$ & $1.1695$\tabularnewline
			\hline 
		\end{tabular}
	\end{center}
	\label{tab1}
\end{table}

\begin{figure*}[h!]
	\centering    
	\subfigure[BER versus distortion ($512 \times 512$ No.0, $L=8$, $K=8$, $\delta(\overline{\mathbf{V}})=1.6887$).] { 
		\label{8x8blim}     
		\includegraphics[width=5.62cm]{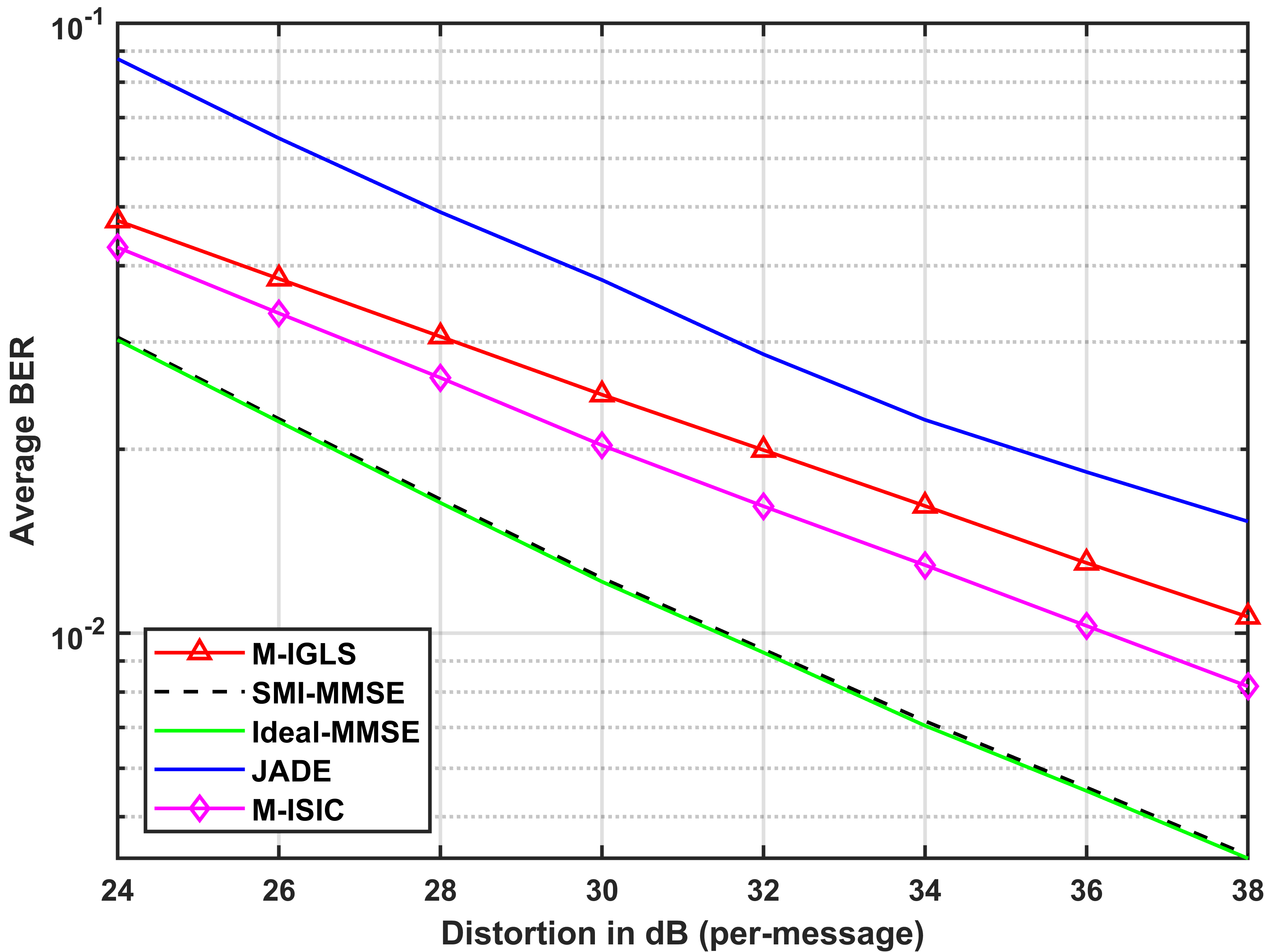}     
	}   
	\subfigure[BER versus distortion ($512 \times 512$ No.709, $L=12$, $K=10$, $\delta(\overline{\mathbf{V}})=1.3819$).] {
		\label{12x10blim}     
		\includegraphics[width=5.62cm]{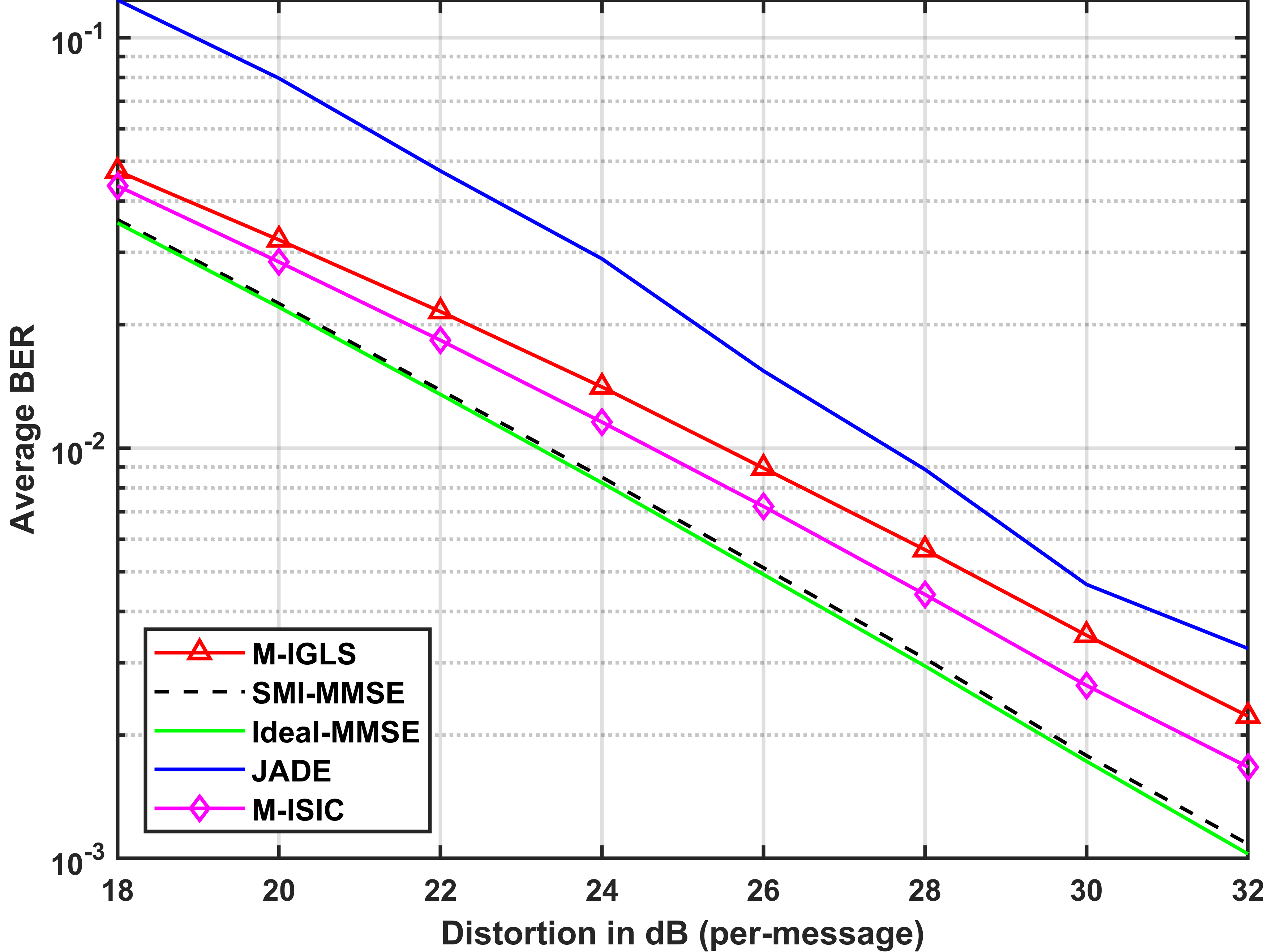}  
	}     
	\subfigure[BER versus distortion ($512 \times 512$ No.5752, $L=15$, $K=12$, $\delta(\overline{\mathbf{V}})=1.3475$).] { 
		\label{15x12blim}     
		\includegraphics[width=5.62cm]{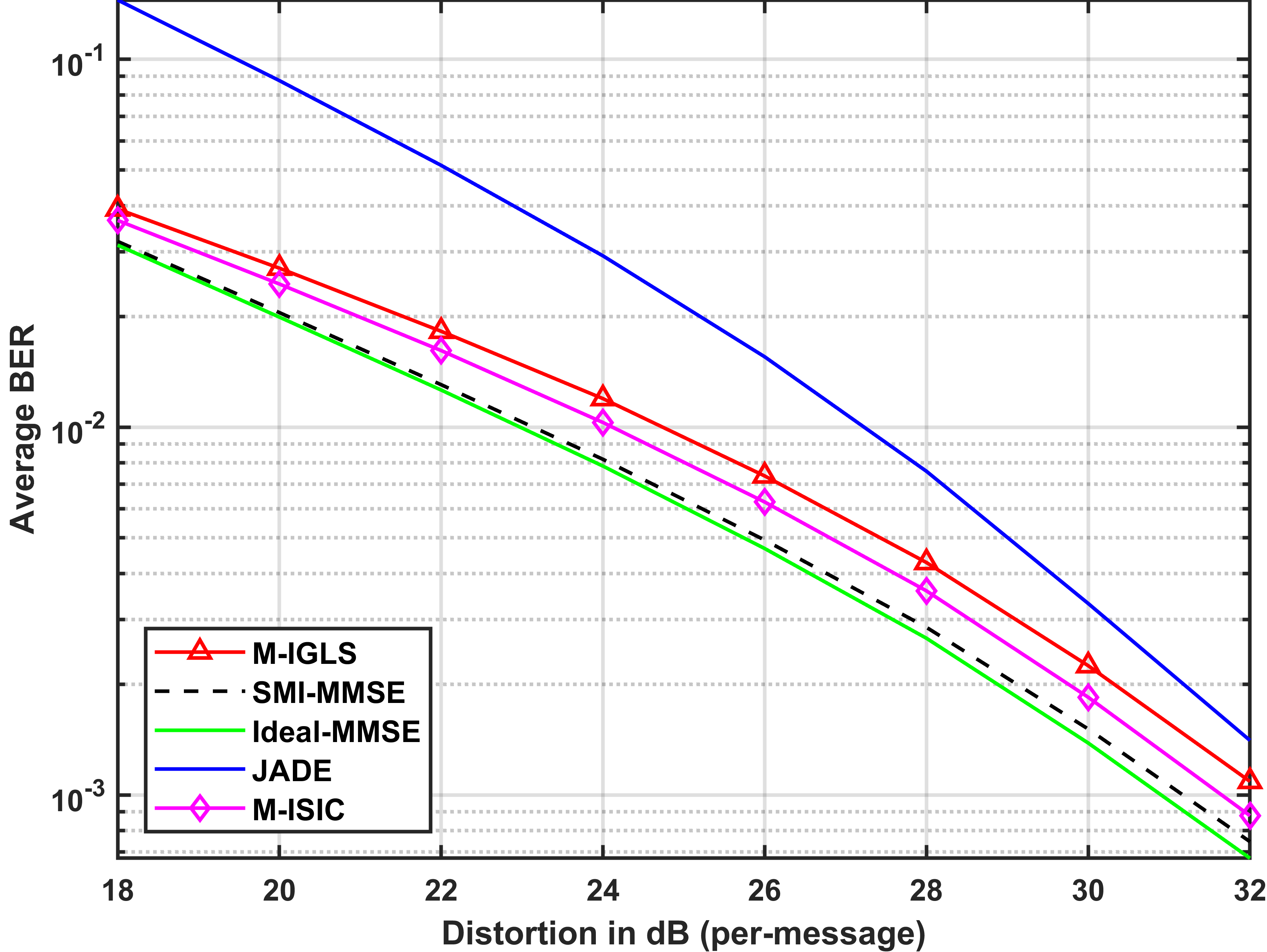}     
	}   
	\caption{BER comparison between M-IGLS and M-ISIC in image blind extraction.}     
	\label{bli_compar}     
\end{figure*}

\begin{figure*}[h!]
	\centering    
	\subfigure[BER versus alphas (track4.mp3, $L=10$, $K=4$, $\delta(\overline{\mathbf{V}})$=1.0892)] { 
		\label{bliaud10x4}     
		\includegraphics[width=5.62cm]{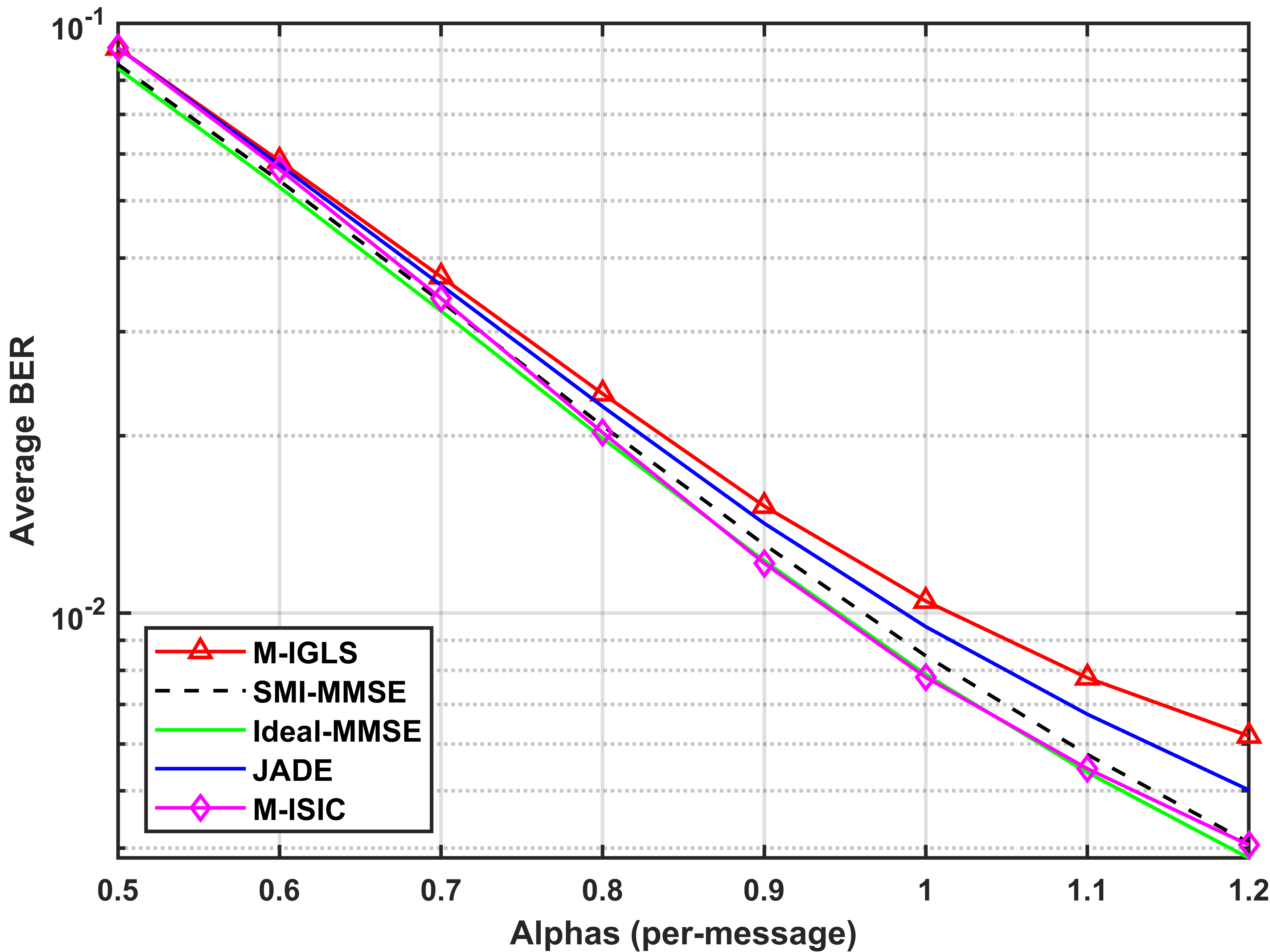}     
	}   
	\subfigure[BER versus alphas (67.flac, $L=12$, $K=6$, $\delta(\overline{\mathbf{V}})$=1.1417)] {
		\label{bliaud12x6}     
		\includegraphics[width=5.62cm]{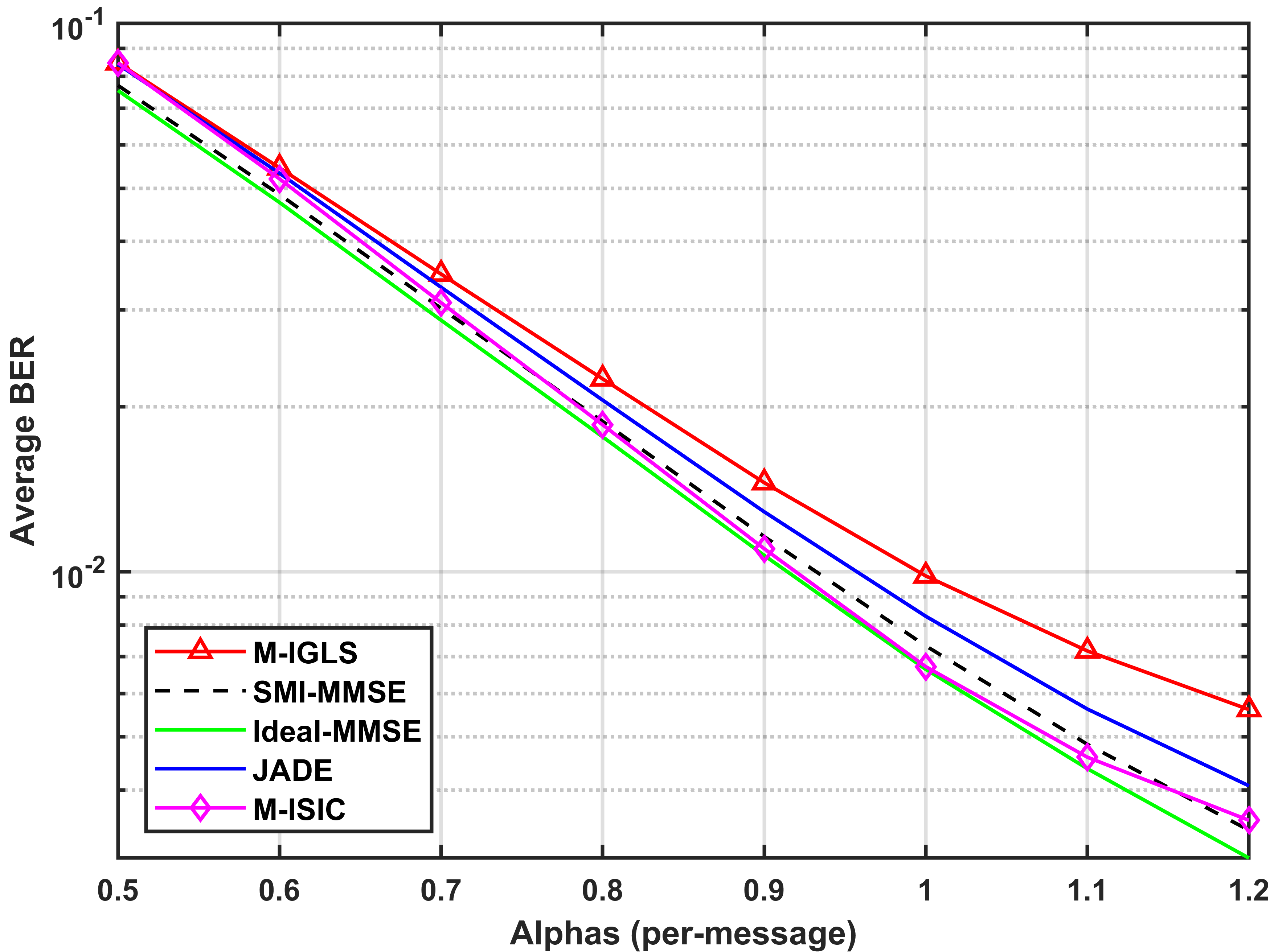}  
	}     
	\subfigure[BER versus alphas (track3.mp3, $L=15$, $K=8$, $\delta(\overline{\mathbf{V}})$=1.1695)] { 
		\label{bliaud15x8}     
		\includegraphics[width=5.80cm]{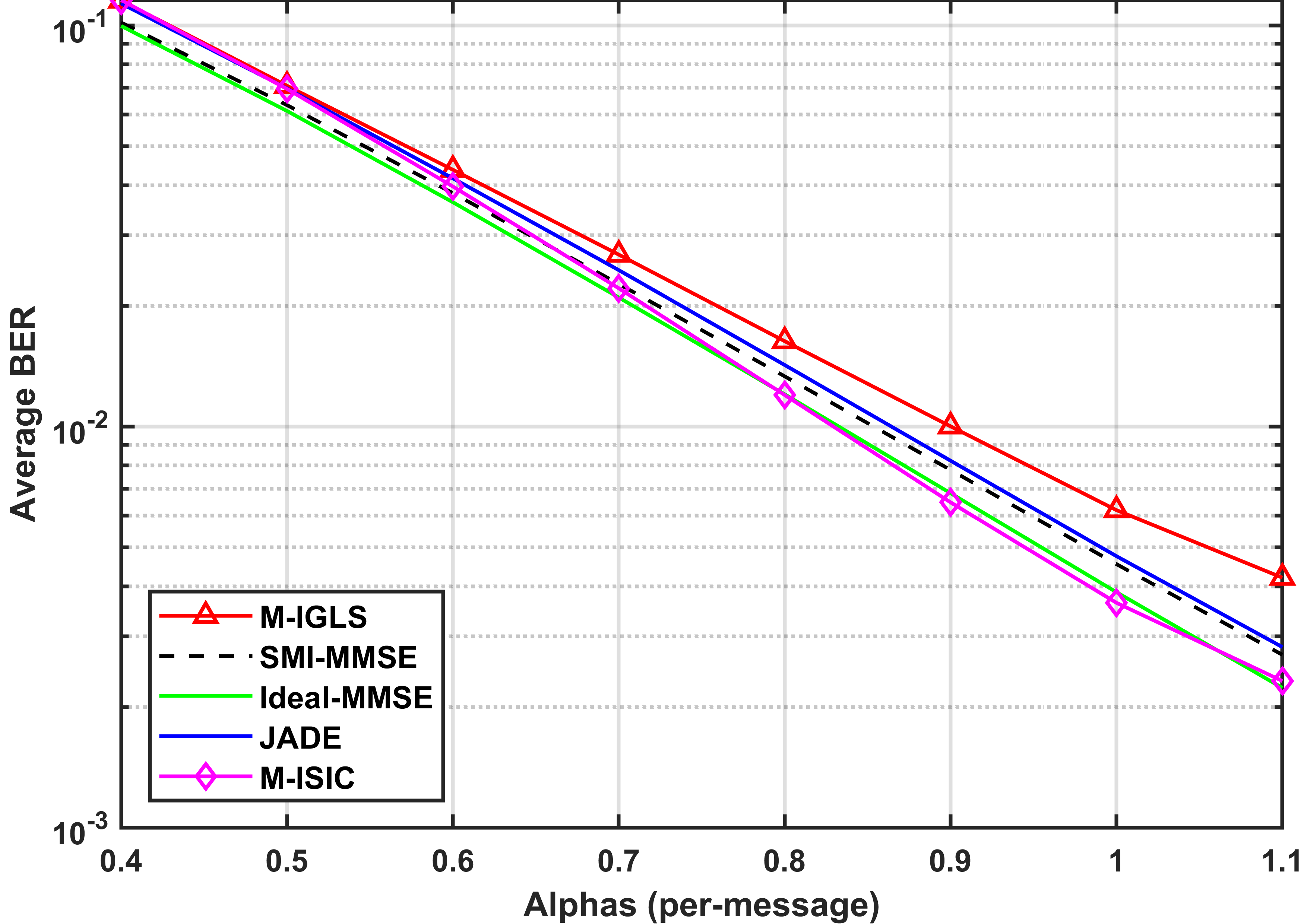}     
	}   
	\caption{BER comparison between M-IGLS and M-ISIC in audio blind extraction.}     
	\label{bli_aud}     
\end{figure*}

\subsection{Blind Extraction}  
Benchmark algorithms in this subsection include: \emph{\romannumeral1)} M-IGLS \cite{li2013extracting}, \emph{\romannumeral2)} SMI-MMSE \cite{ingle2000statisical}, \emph{\romannumeral3)} Ideal-MMSE, \emph{\romannumeral4)} JADE \cite{cardoso1999high}, where Ideal-MMSE represents the ideal version of SMI-MMSE because the autocorrelation matrix   $\mathbf{R_x}$ is exactly known. 	

For image cover, we consider the cases with $\delta(\overline{\mathbf{V}})=1.6887,\,1.3819, \,1.3475$ for the sake of showing the impact of the orthogonality of the lattice bases. In the first example, we consider the case with $L=8$, $K=8$, $\delta(\overline{\mathbf{V}})=1.6887$. The BER versus distortion performance of different algorithms are plotted in Fig. \ref{8x8blim}. With the exact carriers' information, the SMI-MMSE and Ideal-MMSE algorithms serve as the performance upper bounds. The BSS approach, JADE fails to exhibit satisfactory performance. Moreover, M-ISIC outperforms M-IGLS in the whole distortion range of $24\sim 38 \mathrm{dB}$. The second example examines the case with $L=12$, $K=10$, $\delta(\overline{\mathbf{V}})=1.3819$. As depicted in Fig. \ref{12x10blim},
when the carriers become more orthogonal, both M-IGLS and M-ISIC get closer to SMI-MMSE and Ideal-MMSE. The performance gap between M-IGLS and M-ISIC has become smaller. Similar results can be replicated when we further reduce the normalized orthogonality defect. We post one of such figures in Fig. \ref{15x12blim}.

Audio signal is another type of cover for our experiment, with a sampling frequency of 44.1	KHz. We also compare the BER performance for three cases with different quality of carriers, i.e., $\delta(\overline{\mathbf{V}})=1.0892,\,1.1417, \,1.1695$. Fig. \ref{bli_aud} depicts the corresponding experimental results in turn, where the value on horizontal axis controls the distortion degree of audio signal. It is obvious that for the whole alpha range of $0.5\sim 1.2$, M-ISIC has lower BER than M-IGLS in the cases we have listed. From the audio experiments, it can be found that our scheme is superior to M-IGLS even if the value of
 $\delta(\overline{\mathbf{V}})$ is relatively small.

From the above, we observe that M-ISIC performs better than M-IGLS in general. When the carriers $\overline{\mathbf{V}}$ represents a bad lattice basis, M-ISIC apparently outperforms M-IGLS. On the other hand, when the carriers are highly orthogonal, the decision regions of M-IGLS and M-ISIC become similar in shape, then the performance of the two algorithms tends to be the same.

\begin{figure*}[h!]
	\centering    
	\subfigure[BER versus distortion ($512 \times 512$ No.6088, $L=8$, $K=8$, $\delta(\overline{\mathbf{V}})$=1.6887)] { 
		\label{6088}     
		\includegraphics[width=5.62cm]{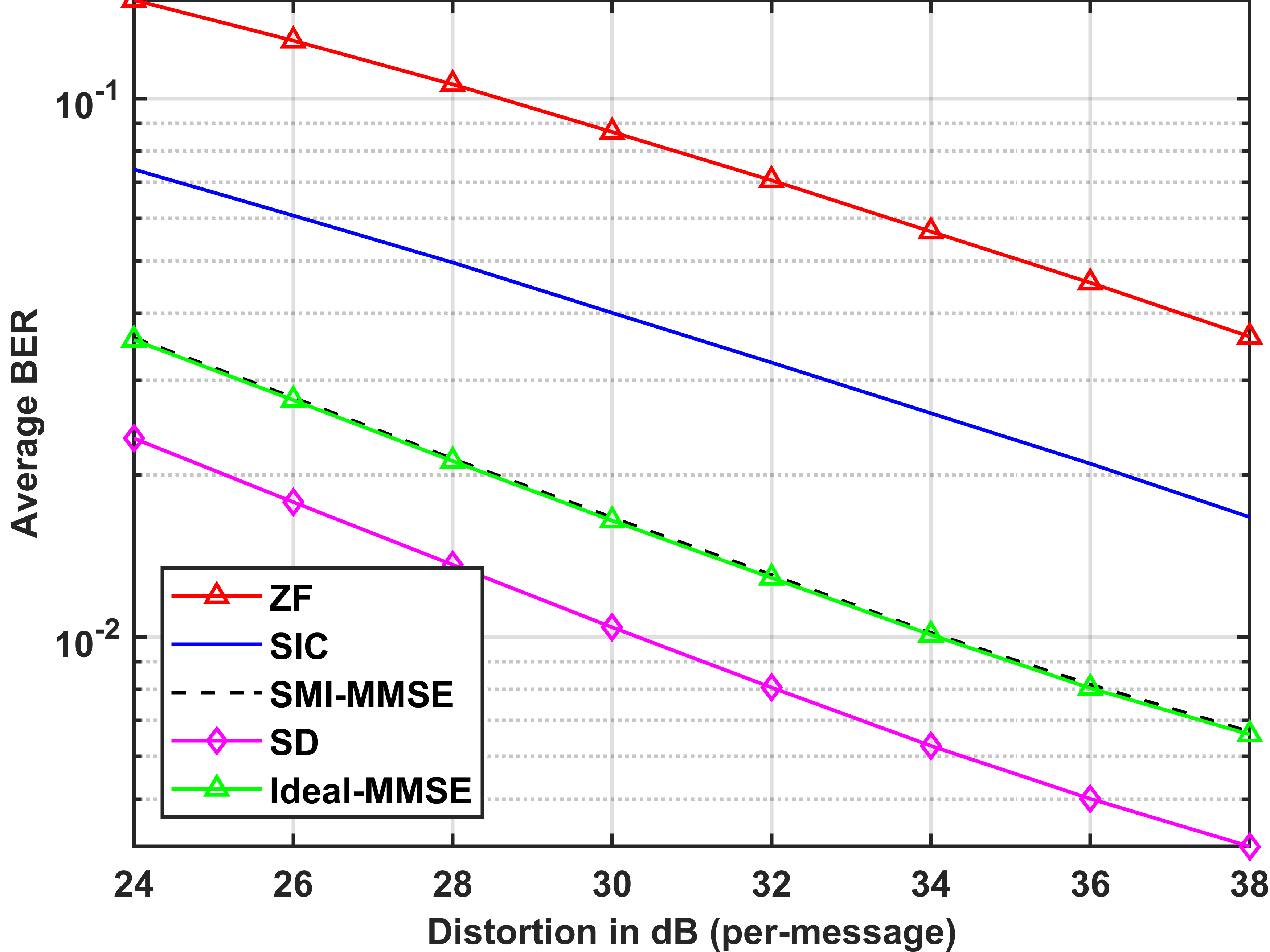}     
	}   
	\subfigure[BER versus distortion ($512 \times 512$ No.6120, $L=12$, $K=10$, $\delta(\overline{\mathbf{V}})$=1.3819)] {
		\label{6120}     
		\includegraphics[width=5.62cm]{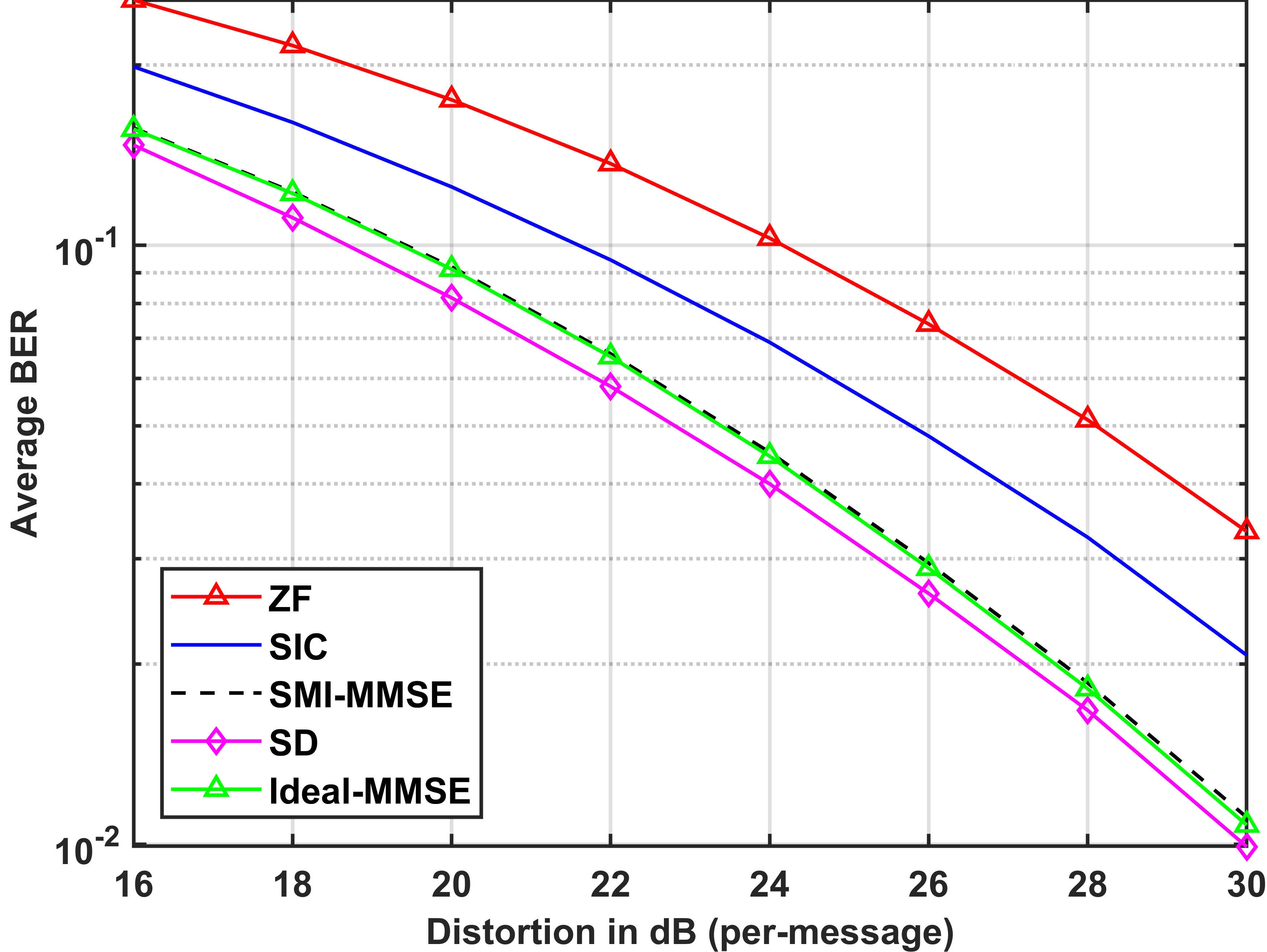}  
	}     
	\subfigure[BER versus distortion ($512 \times 512$ No.6148, $L=15$, $K=12$, $\delta(\overline{\mathbf{V}})$=1.3475)] { 
		\label{6148}     
		\includegraphics[width=5.62cm]{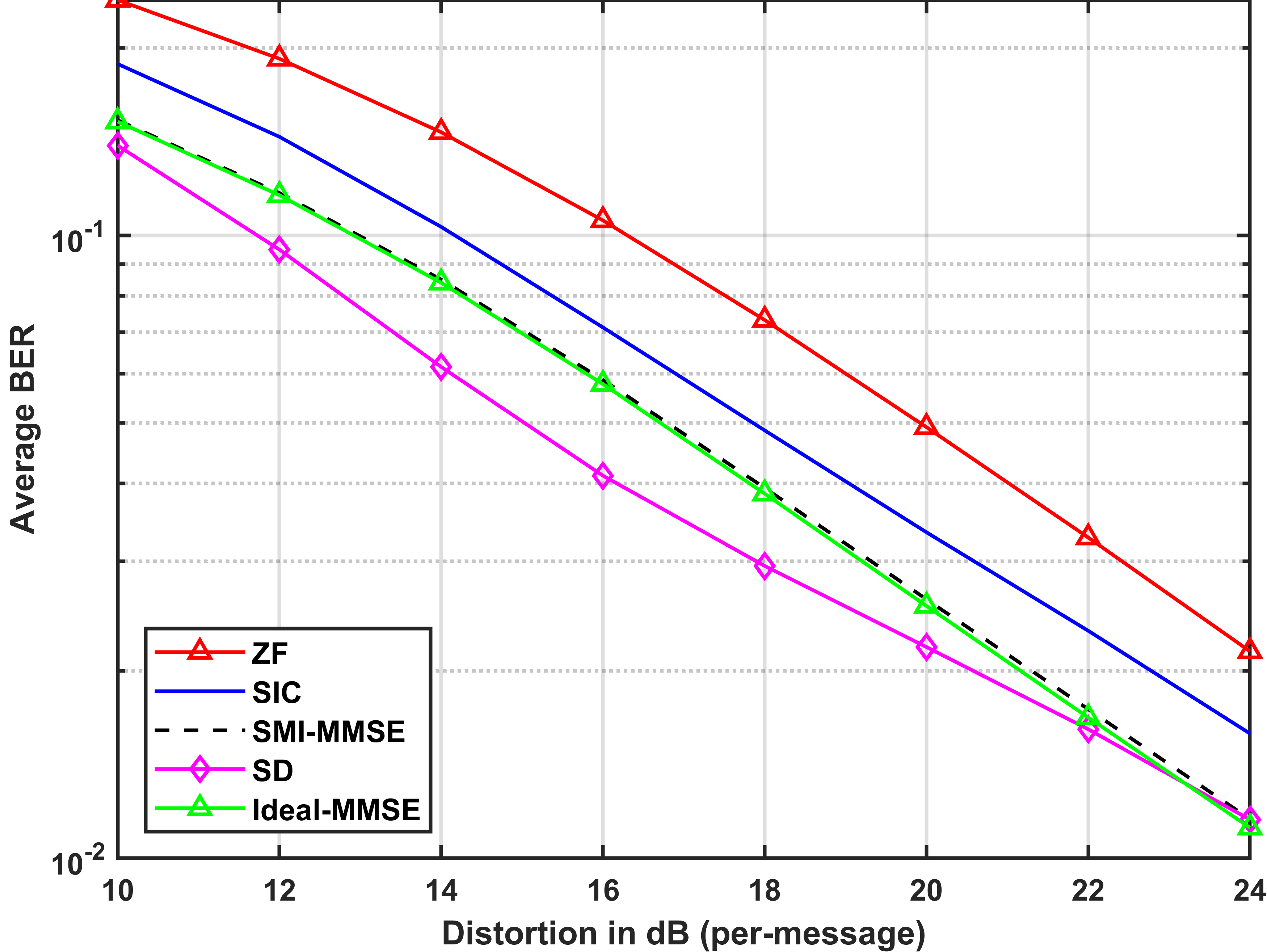}     
	}   
	\caption{BER comparison in image non-blind extraction.}     
	\label{512simu}     
\end{figure*}

 
\subsection{Non-blind Extraction}

When the carriers are known for the receiver, more sophisticated decoding algorithm can be employed to achieve higher accuracy. Next, we examine the BER comparison of each algorithm for non-blind extraction. We adopt the same setting of carriers as in the previous subsection.

From Fig. \ref{6088}, we observe that ZF and SIC struggle to exhibit satisfactory performance due to the bad orthogonality of lattice basis. Sphere decoding significantly outperforms SMI-MMSE and even Ideal-MMSE in the whole distortion range from 24 to 38 dB. If $\delta(\overline{\mathbf{V}})$ decreases, as
shown in Fig. \ref{6120} and Fig. \ref{6148}, the distance between the BER curves of sphere decoding and SMI-MMSE/Ideal-MMSE becomes smaller gradually. However, in the low distortion range, sphere decoding still enjoys the best performance.




The experimental results of audio signals are given in Fig. \ref{audiotest}. In the examples listed below, we find that SIC performs much better on audios rather than on images. Notably, sphere decoding shows a more prominent advantage, especially when alpha increases. Even if the value of $\delta(\overline{\mathbf{V}})$ trends to 1, meaning the basis vectors more orthogonal, the BER of sphere decoding is still much lower than other algorithms.


From the above, we observe that the performance of sphere decoding is generally not worse than that of SMI-MMSE and Ideal-MMSE. When the carriers $\overline{\mathbf{V}}$ represents a bad lattice basis, sphere decoding apparently outperforms SMI-MMSE and Ideal-MMSE. On the other hand, when the carriers are highly orthogonal, the performance of MMSE and SD tends to be the same. 

\begin{figure*}[t]
	\centering    
	\subfigure[BER versus alphas (track4.mp3, $L=10$, $K=4$, $\delta(\overline{\mathbf{V}})$=1.0892)] { 
		\label{12x4audio}     
		\includegraphics[width=5.62cm]{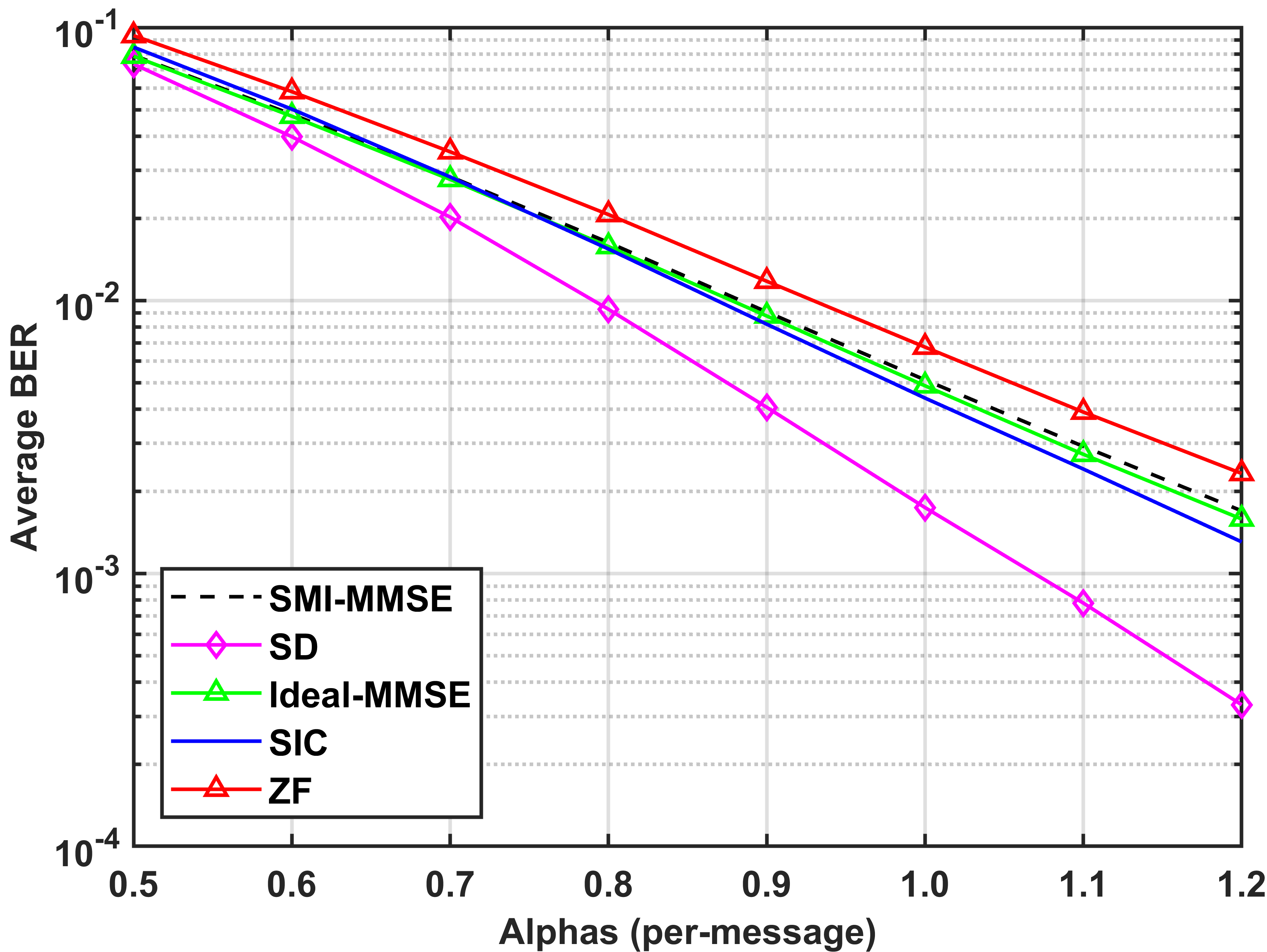}     
	}   
	\subfigure[BER versus alphas (67.flac, $L=12$, $K=6$, $\delta(\overline{\mathbf{V}})$=1.1417)] {
		\label{13x6audio}     
		\includegraphics[width=5.62cm]{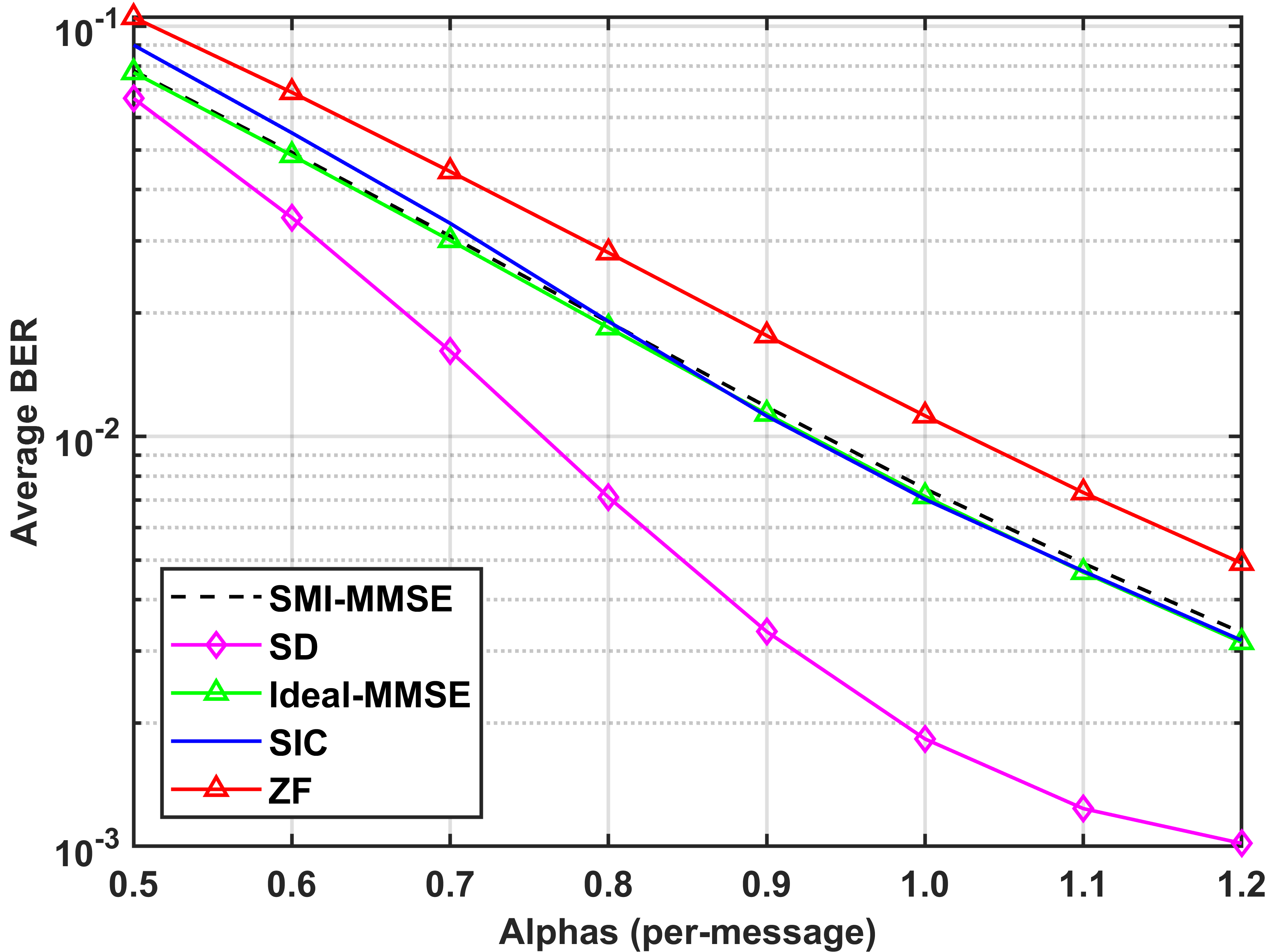}  
	}     
	\subfigure[BER versus alphas (track3.mp3, $L=15$, $K=8$, $\delta(\overline{\mathbf{V}})$=1.1695)] { 
		\label{16x8audio}     
		\includegraphics[width=5.62cm]{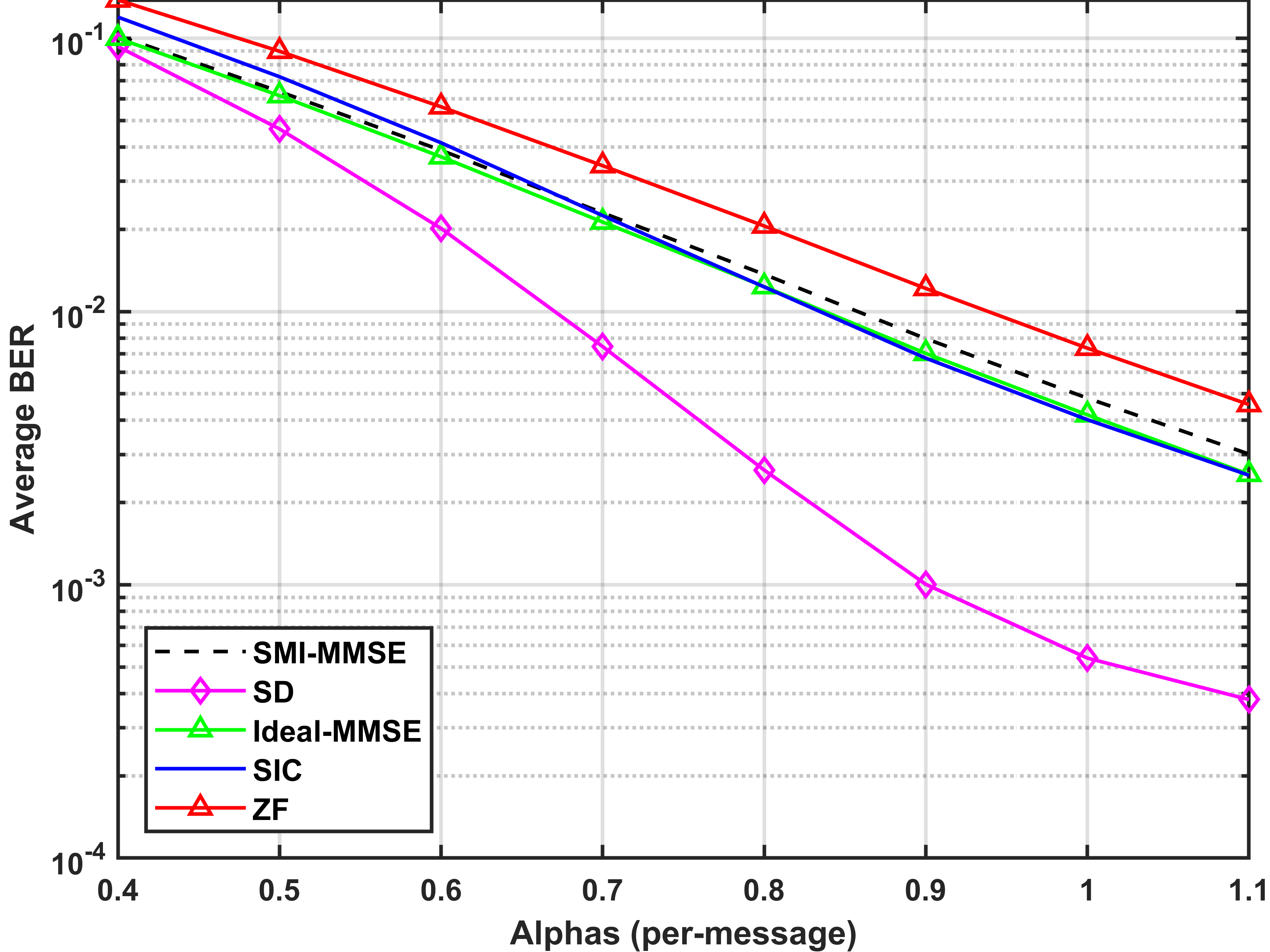}     
	}   
	\caption{BER comparison in audio non-blind extraction.}     
	\label{audiotest}     
\end{figure*}

\section{Conclusions}
This paper studies both blind and non-blind extraction of spread-spectrum hidden data from the perspective of lattices. To achieve better decoding performance, we employ more accurate lattice decoding algorithms in blind and non-blind extraction. The experimental results demonstrate that our schemes are superior to the existing solutions especially when the channel matrix lacks sufficient  orthogonality. 

\appendix
\section{The equivalence of GLS and ZF}
Assuming $\mathbf{V}$ is known, the least-squares estimation \cite{li2013extracting} of $\mathbf{B}$ used in Step 5 of Algorithm 1 is:
\begin{align}
\hat{\mathbf{B}}_{\mathrm{GLS}} &	=  \left(\mathbf{V}^\mathrm{T}\mathbf{R}_{\mathbf{y}}^{-1}\mathbf{V}\right)^\mathrm{-1}
\mathbf{V}^\mathrm{T}\mathbf{R}_{\mathbf{y}}^{-1}\mathbf{Y} \nonumber \\
&=  \left(\left(\mathbf{V}^\mathrm{T}\mathbf{R}_{\mathbf{z}}^{-1}\mathbf{V}\right)^\mathrm{-1} + \mathbf{I} \right)\mathbf{V}^\top \nonumber \\
&\,\,\,\,\,\,\times \left(\mathbf{R}_{\mathbf{z}}^{-1}- \mathbf{R}_{\mathbf{z}}^{-1} \mathbf{V} \left(\mathbf{V}^\mathrm{T}\mathbf{R}_{\mathbf{z}}^{-1}\mathbf{V} + \mathbf{I}\right)^\mathrm{-1} \mathbf{V}^\top \mathbf{R}_{\mathbf{z}}^{-1} \right)   \nonumber\\
& =  \left(\mathbf{V}^\mathrm{T}\mathbf{R}_{\mathbf{z}}^{-1}\mathbf{V}\right)^\mathrm{-1}
\mathbf{V}^\mathrm{T}\mathbf{R}_{\mathbf{z}}^{-1}\mathbf{Y} \nonumber \\
&=\left(\mathbf{V}^\mathrm{T}\mathbf{R}_{\mathbf{z}}^{-\frac{1}{2}}\mathbf{R}_{\mathbf{z}}^{-\frac{1}{2}}\mathbf{V}\right)^\mathrm{-1}\mathbf{V}^\mathrm{T}
\mathbf{R}_{\mathbf{z}}^{-\frac{1}{2}}\mathbf{R}_{\mathbf{z}}^{-\frac{1}{2}}\mathbf{Y} \nonumber \\
&=\left[(\mathbf{R}_{\mathbf{z}}^{-\frac{1}{2}}\mathbf{V})^\mathrm{T}(\mathbf{R}_{\mathbf{z}}^{-\frac{1}{2}}\mathbf{V})\right]^{-1}
(\mathbf{R}_{\mathbf{z}}^{-\frac{1}{2}}\mathbf{V})^\mathrm{T}(\mathbf{R}_{\mathbf{z}}^{-\frac{1}{2}}\mathbf{Y}).\label{GLS33}
\end{align}

In the language of ZF, recall that $\overline{\mathbf{Y}} = \mathbf{R}_{\mathbf{z}}^{-\frac{1}{2}}\mathbf{Y}$, and   $\overline{\mathbf{V}} = \mathbf{R}_{\mathbf{z}}^{-\frac{1}{2}}\mathbf{V}$.  Thus Eq. (\ref{GLS33}) equals to $(\overline{\mathbf{V}}^\mathrm{T}\overline{\mathbf{V}})^{-1}\overline{\mathbf{V}}^\mathrm{T}\overline{\mathbf{Y}}^\mathrm{T}$, which justifies $\hat{\mathbf{B}}_{\mathrm{GLS}} = \hat{\mathbf{B}}_{\mathrm{ZF}}$.

\bibliographystyle{IEEEtranMine}
\bibliography{lib}

\end{document}